\newcommand{\eqdef}{\overset{def}{=}}
\newcommand{\fr}{{f_{r}}}
\newcommand{\fAC}{{f_{AC}}}
\newcommand{\fABC}{f_{ABC}}
\newcommand{\fBC}{f_{BC}}
\newcommand{\fBAC}{f_{BAC}}
\newcommand{\CAB}{{C_{AB}}}
\newcommand{\CAC}{{C_{AC}}}
\newcommand{\CABC}{C_{ABC}}
\newcommand{\CBC}{C_{BC}}
\newcommand{\CBAC}{C_{BAC}}
\newcommand{\nAC}{n_{AC}}
\newcommand{\n}{n}
\newcommand{\nABC}{n_{ABC}}
\newcommand{\nBC}{n_{BC}}
\newcommand{\nBAC}{n_{BAC}}
\newcommand{\tot}{N}
\newcommand{\C}{C}
\newtheorem{theorem}{Theorem}
\newtheorem{corrollary}[theorem]{Corollary}
\newtheorem{proposition}[theorem]{Proposition}
\newtheorem{example}[theorem]{Example}
\newtheorem{lemma}[theorem]{Lemma}
\begin{document}
\definecolor{zzttqq}{rgb}{0.6,0.2,0.0}
\definecolor{cqcqcq}{rgb}{0.7529411764705882,0.7529411764705882,0.7529411764705882}

\newcommand{\F}[1]{\textbf{F#1}}

\title{Load Balancing Congestion Games and their Asymptotic Behavior}

\author{%
\IEEEauthorblockN{Eitan Altman\IEEEauthorrefmark{1},
Corinne Touati\IEEEauthorrefmark{1}\IEEEauthorrefmark{2}%
\IEEEauthorblockA{\IEEEauthorrefmark{1}Inria\\
\IEEEauthorrefmark{2} CNRS, LIG, Univ. Grenoble Alpes, LIG, F-38000 Grenoble, France\\
Email: \{eitan.altman, corinne.touati\}@inria.fr}
}}
\maketitle

\begin{abstract}
A central question in routing games has been to establish conditions for the uniqueness of the equilibrium, either in terms of network topology or in terms of costs. This question is well understood in two classes of routing games. The first is the non-atomic routing introduced by Wardrop on 1952 in the context of road traffic in which each player (car) is infinitesimally small; a single car has a negligible impact on the congestion. Each car wishes to minimize its expected delay. Under arbitrary topology, such games are known to have a convex potential and thus a unique equilibrium. The second framework is splitable atomic games: there are finitely many players, each  controlling the route of a population of individuals (let them be cars in road traffic or packets in the communication networks). In this paper, we study two other frameworks of routing games in which each of several players has an  integer number of connections (which are population of packets) to route and where there is a constraint that a connection cannot be split. 
Through a particular game with a simple three link topology, we identify
various novel and surprising properties of games within these frameworks.
We show in particular that equilibria are non unique even in the potential
game setting of Rosenthal with strictly convex link costs. We further show
that non-symmetric equilibria arise in symmetric networks.
\end{abstract}

\section{Introduction}

A central question in routing games has been to establish conditions for
the uniqueness of the equilibria, either in terms of the network topology
or in terms of the costs.  A survey on these issues is given in~\cite{R1}.

The question of uniqueness of equilibria has been studied in two
different frameworks.  The first, which we call
\F1, is the \emph{non-atomic routing} introduced
by Wardrop on 1952 in the context of road traffic in which each player (car)
is infinitesimally small; a single car has a negligible impact on
the congestion. Each car wishes to minimize its expected delay.
Under arbitrary topology, such games are known to have a convex potential 
and thus have a unique equilibrium~\cite{R2}.
The second framework, denoted by \F2, is \emph{splitable 
atomic games}. There are finitely
many players, each  controlling the route of a population of individuals.
This type of games have already been studied in the context of road traffic
by Haurie and Marcotte \cite{RR} but have become central in the telecom
community to  model routing decisions of Internet Service Providers
that can decide how to split the traffic of their subscribers among
various routes so as to minimize network congestion~\cite{ISP}. 

In this paper we study properties 
of equilibria in two other frameworks of routing games
which exhibit surprising behavior.
The first, which we call \F3,
known as \emph{congestion games} \cite{rosenthal}, consists of atomic
players with non splitable traffic: each player has 
to decide on the path to be followed by for its traffic and cannot split
the traffic among various paths. This is a non-splitable
framework. We further introduce a new semi-splitable framework, denoted by \F4,
in which each of several players has an  integer number of connections
to route. It can choose different routes for different  connections
but there is a constraint that the traffic of a connection cannot be
split. In the case where each player controls 
the route of a single 
connection and all connections have the same size, this reduces to
the congestion game of Rosenthal~\cite{rosenthal}.

We consider in this paper routing games with 
additive costs (i.e. the cost of a path equals
to the sum of costs of the links over
the path) and the cost of a link is assumed to be convex increasing in the
total flow in the link.
The main goal of this
paper is to study a particular symmetric game of this type in a simple
topology consisting of three nodes and three links. We focus both on 
the uniqueness issue as well as on other properties of the equilibria.

This game has already been studied within the two frameworks \F1-\F2 that
we mentioned above. In both frameworks it was shown \cite{AKH} to have a
unique equilibrium. Our first finding is that in frameworks \F3 and \F4
there is a multitude of equilibria. The price of stability is thus
different than the price of anarchy and we compute both.
We show the uniqueness of the equilibrium
in the limit as the number of players $N$ grows to infinity extending 
known results \cite{RR} from framework \F2 to the new frameworks.
In framework \F2 uniqueness is in fact achieved not only for
the limiting games but also for all $N$ large enough. We show
that this {\it is not the case} for \F3-\F4:
for any finite $N$ there may be several equilibria.
We finally show a surprising property of \F4 that exhibits non
symmetric equilibria in our symmetric network example while under
\F1, \F2 and \F3 there are no asymmetric equilibria.

The structure of the paper is as follows. We first introduce the model and the notations 
used in the while study, we then move on to the properties of frameworks \F3 (Section~\ref{sec:F3})
and \F4 (Section~\ref{sec:F4}) before concluding the paper. For completeness, we also include in the
Appendix the proofs of the theorems and propositions of the paper although they will be removed from the 
final manuscript so as to comply with the conference regulations for final manuscript but will be made 
available on ArXiv.

\section{Model and Notations}
\label{sec:model}

We shall use throughout the term \emph{atomic game} to denote 
situations in which decisions of a player 
have an impact on other players' utility. It is \emph{non-atomic}
when players are  infinitesimally small and are viewed like a fluid
of players, such that a single player has a negligible impact
on the utility of other players. 

We consider a system of three nodes ($A$, $B$ and $C$) with two incoming traffic sources (respectively from node $A$ and $B$) and an exit node $C$. There are a total of $\tot$ connections originating from each one of the sources. Each connection can either be sent directly to node $C$ or rerouted via the remaining node.
The system is illustrated in Figure~\ref{fig:system}.

\definecolor{zzttqq}{rgb}{0.6,0.2,0.0}
\definecolor{qqqqff}{rgb}{0.0,0.0,1.0}
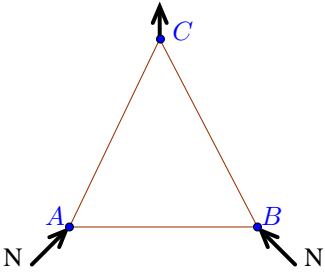
\begin{figure}[htb]
\begin{center}
\begin{tikzpicture}[line cap=round,line join=round,>=angle 45,x=1.0cm,y=1.0cm]
\draw [color=zzttqq] (2.2,4)-- (1.0,1.5); \draw [color=zzttqq] (1.0,1.5)-- (3.5,1.5);
\draw [color=zzttqq] (3.5,1.5)-- (2.2,4);
\draw [->,line width=1.5pt] (0.5,1.0) -- (1.0,1.5);
\draw [->,line width=1.5pt] (4.0,1.0) -- (3.5,1.5);
\draw [->,line width=1.5pt] (2.2,4) -- (2.2,4.5);
\draw (0,1.34) node[anchor=north west] {N};
\draw (4,1.34) node[anchor=north west] {N};
\draw [fill=qqqqff] (2.21,4) circle (1.5pt);
\draw[color=qqqqff] (2.51,4.11) node {$C$};
\draw [fill=qqqqff] (1.0,1.5) circle (1.5pt);
\draw[color=qqqqff] (0.81,1.65) node {$A$};
\draw [fill=qqqqff] (3.5,1.5) circle (1.5pt);
\draw[color=qqqqff] (3.69,1.65) node {$B$};
\end{tikzpicture}
\end{center}
\caption{Physical System}
\label{fig:system}
\end{figure}

This model has been used to model load balancing issues in computer networks, see \cite{AKH} and references therein.
Jobs arrive to two computing centers represented by nodes A and B. A job can be processed locally at the node where
it arrives or it may be forwarded to the other node incurring further communication delay. The costs of links $[AC]$ and
$[BC]$ represent the processing delays of jobs processed at nodes A and B respectively. Once processed, the jobs leave
the system. A connection is a collection of jobs with similar characteristics (e.g. belonging to the same application). 

We introduce the following notations:
\begin{itemize}
\item A link between two nodes, say A and B, is denoted by $[AB]$. Our considered system
has three links $[AB]$, $[BC]$ and $[AC]$.
\item A route is simply referred by a sequence of nodes. Hence, the system has four connections: two originating from node $A$ (route AC and ABC) and two originating from node $B$ (route BC and BAC).
\end{itemize}

Further, in the following, $\nAC$, $\nBC$, $\nABC$ and $\nBAC$ will refer to the number of connections routed via the different routes while $\n[AC]$, $\n[BC]$ and $\n[AB]$ will refer to the number of connections on each subsequent link.
By conservation law, we have:
$$ \nAC+\nABC = \nBC + \nBAC = \tot$$
$$ \text{and }\left\{
\begin{array}{l}
\n[AC] = \nAC + \nBAC, \\ \n[BC] = \nABC + \nBC, \\ 
\n[AB] = \nBAC + \nABC. 
\end{array} \right.
$$

For each route $r$, we also define the fraction (among $N$)
of flow using it, i.e. $\fr = \n_r / N$. The conservation law becomes $\fAC+\fABC = \fBC + \fBAC = 1$.

Finally, the performance measure considered in this work is the cost (delay) of connections experienced on their route. We consider a simple model in which the cost is additive (i.e. the cost of a connection on a route is simply taken as the sum of delays experienced by the connection over the links that constitute this route). We further assume that the costs on each link are linear with coefficient $a/N$ on link $[AB]$ and coefficient $b/N$ on link $[AC]$ and $[BC]$, i.e.
$$ \left\{
\begin{array}{ll}
\displaystyle \C_{[AB]} = \frac{a}{N} \n[AB] = a (\fBAC+\fABC),\\
\displaystyle \C_{[AC]} = \frac{b}{N} \n[AC] = b (\fBAC+\fAC),\\
\displaystyle \C_{[BC]} = \frac{b}{N} \n[BC] = b (\fBC+\fABC). 
\end{array}\right.
$$
and then:
$$ \begin{array}{@{}ll@{}}
\CAB = \C_{[AB]}, &
\CABC = \C_{[AB]} + \C_{[BC]}, \\
\CBC = \C_{[BC]}, & 
\CBAC = \C_{[AB]} + \C_{[AC]}.
\end{array}
$$
We restrict our study to the (pure) Nash equilibria and give the equilibria in terms of the corresponding flows marked by a star. By conservation law, the equilibria is uniquely determined by the specification of $\fABC^*$ and $\fBAC^*$ (or equivalently 
$\nABC^*$ and $\nBAC^*$).

We recall that in this paper, we consider two types of decision models.
In the first (\F3), the decision is taken at the connection level 
(Section~\ref{sec:F3}), i.e. each connection has its own decision maker
that seeks to minimize the connection's cost, and the connection cannot be split into different routes.
In the second (\F4), (Section~\ref{sec:F4}) each one of the two source nodes decides on the routing of all
the connections originating there. Each connection of a given source node (either A or B) can be routed independently 
but a connection cannot be split into different route. We hence refer to \F4 this semi-splitable framework.
Note that the two-approaches (\F3 and \F4) coincide when there is only $N=1$ connection at each source, which we also detail later.

\section{Atomic Non-Splitable Case and its non-atomic limit (\F3 framework)}
\label{sec:F3}

We consider here the case where each connection belongs to an
 individual user acting selfishly. 

We first show that for fixed parameters,
the game may have several equilibria, all of
which are symmetric for any number of players.
The  number of distinct equilibria can be made arbitrary large
by an appropriate choice of the parameters $a$ and $b$, and for any choice of $a$ and $b$, there exists $N_0$ such that the number of equilibria remain constant for all $N \geq N_0$.
We then show properties of
the limiting game obtained as the number of 
of players increases to infinity.

\subsection{Non-uniqueness of the equilibrium}

\begin{theorem}
The set of pure Nash equilibria of the game are the points satisfying
$\displaystyle \nBAC^* = \nABC^* \leq \frac{b}{2a}$.
\label{th:nonSplit}
\end{theorem}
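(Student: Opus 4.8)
The plan is to rewrite the pure-equilibrium condition as a handful of linear inequalities in the integers $\nABC$ and $\nBAC$ and then read off their solution set. First I would fix a state, described by $\nABC$ and $\nBAC$ (hence $\nAC = N-\nABC$, $\nBC = N-\nBAC$, and the link loads $\n[AB],\n[AC],\n[BC]$ of Section~\ref{sec:model}), and write the ``no profitable unilateral deviation'' condition for each of the four route populations. In \F3 a deviation is a single connection switching to its unique alternative route, which perturbs the relevant link loads by exactly $1/N$ in the cost functions, so the deviating connection must evaluate its \emph{new} route at the perturbed loads. Substituting the linear costs, clearing the factor $1/N$, and writing $s \eqdef \nABC+\nBAC$ and $d \eqdef \nABC-\nBAC$, the four conditions become $-2bd \le as + a + b$ (a connection on $AC$ does not gain by moving to $ABC$; present only when $\nAC\ge1$), $as + 2bd \le b$ (a connection on $ABC$ does not gain by moving to $AC$; present only when $\nABC\ge1$), and their $A\leftrightarrow B$ mirror images $2bd \le as + a + b$ and $as - 2bd \le b$.

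Next I would show that $d=0$ at every equilibrium. If not, the $A\leftrightarrow B$ symmetry of the topology and of the costs lets me assume $\nABC>\nBAC$, so $\nABC\ge1$ and the inequality $as+2bd\le b$ is in force; but then $as \ge a\,\nABC \ge a$ and $2bd\ge 2b$, whence the left-hand side is at least $a+2b>b$ (using $a,b>0$), a contradiction. Hence $\nABC=\nBAC$.

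Putting $k=\nABC=\nBAC$, so $s=2k$ and $d=0$, the four inequalities collapse to $0\le 2ak+a+b$, which always holds, and $2ak\le b$, i.e. $k\le b/(2a)$, which comes from the $ABC\to AC$ move (equivalently its mirror $BAC\to BC$). To finish I would check which inequalities are actually binding at the extremes: the $AC\to ABC$ and $BC\to BAC$ inequalities are present only when $k\le N-1$ and then hold automatically, while the $ABC\to AC$ and $BAC\to BC$ inequalities are present only when $k\ge1$. Thus $k=0$ is always an equilibrium (and $0\le b/(2a)$ trivially), and for $k\ge1$ the state is an equilibrium exactly when $k\le b/(2a)$; combining the two cases yields precisely $\nBAC^*=\nABC^*\le b/(2a)$ (the constraint $k\le N$ being automatic).

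The only genuinely error-prone part is the first step: keeping straight the atomic perturbation of the link loads — the deviating connection sees one extra unit on each link of the route it joins — and tracking which of the four inequalities degenerate to a vacuous statement at the boundary states $k=0$ and $k=N$. Once those inequalities are written correctly, the remaining steps are short.
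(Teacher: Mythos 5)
Your proposal is correct and follows essentially the same route as the paper: write the four unilateral-deviation inequalities with the $1/N$ (i.e.\ one-connection) perturbation on the links of the newly joined route, deduce that any equilibrium must satisfy $\nABC^*=\nBAC^*$, and then reduce the remaining constraint to $2a\,\nABC^*\le b$. Your handling of the symmetry step (invoking only the inequality guaranteed to be in force for the more-loaded route, and tracking which inequalities are vacuous at $k=0$ and $k=N$) is in fact slightly more careful than the paper's, which states all four inequalities unconditionally, but the argument is the same in substance.
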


\begin{proof}
Consider an equilibrium $(\nABC^*, \nBAC^*)$. Then, we have the following conditions:
\begin{equation} \left\{
\begin{array}{@{}l}
\C_{[AC]}  = \CAC \leq (\C_{[AB]}+a/N) + (\C_{[BC]}+b/N) \\
\C_{[BC]} = \CBC \leq (\C_{[AB]}+a/N) + (\C_{[AC]}+b/N) \\
\C_{[AB]} + \C_{[BC]} = \CABC \leq \C_{[AC]}+b/N \\ \label{eq:th1}
\C_{[AB]} + \C_{[AC]} = \CBAC \leq \C_{[BC]}+b/N 
\end{array}\right.\end{equation}


Note that the last two equations lead to:
$$
\left\{
\begin{array}{l}
\C_{[AB]} \leq - \C_{[BC]} + \C_{[AC]}+b/N \\
\C_{[AB]} \leq - \C_{[AC]} + \C_{[BC]}+b/N 
\end{array}\right.$$

One can check that $(\nABC^*, \nBAC^*) =$ $(0,0)$ is a solution. If the equilibrium is not the trivial null solution, then either $\nABC^* \neq 0$ or $\nBAC^* \neq 0$. Either way leads to $\C_{[AB]} > 0$ and thus $-b/N < \C_{[AC]}-\C_{[BC]} < b/N$ which implies that $\C_{[AC]}=\C_{[BC]}$. Equation~\ref{eq:th1} becomes:
$$ \left\{
\begin{array}{@{}l}
0 \leq a (\nABC^* + 1 + \nBAC^*) + b\\
a (\nABC^* + \nBAC^*) \leq b 
\end{array}\right. \Leftrightarrow
a (\nABC^* + \nBAC^*) \leq b
$$

But then:\\
$\C_{[AC]}=\C_{[BC]} \Leftrightarrow b (\nAC^*+\nBAC^*) =  b (\nBC^* + \nABC^*) \Leftrightarrow$ 
$N-\nABC^*+\nBAC^* = N - \nBAC^* + \nABC^* \Leftrightarrow \nABC^* = \nBAC^*$. Therefore the equilibrium is symmetrical.
Jointly with $a (\nABC^* + \nBAC^*) \leq b$, this leads to the conclusion.
\end{proof}

\begin{corrollary}
\label{cor2}
For $N \geq N_0 = \lceil \frac{b}{2a} \rceil$, there exists exactly $b/2a+1$ Nash equilibria in pure strategies.
\end{corrollary}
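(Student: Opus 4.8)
The plan is to read off the corollary directly from Theorem~\ref{th:nonSplit}. By that theorem, the pure Nash equilibria are exactly the pairs $(\nABC^*, \nBAC^*)$ with $\nABC^* = \nBAC^*$ and this common value $k$ satisfying $k \le \frac{b}{2a}$. Since $k = \nABC^*$ must be a nonnegative integer and moreover $k \le \tot$ (it cannot exceed the number of connections available at a source), the set of equilibria is in bijection with the set of integers $k$ with $0 \le k \le \min\left(\tot, \floor*{\frac{b}{2a}}\right)$.

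First I would observe that when $\tot \ge N_0 = \ceil*{\frac{b}{2a}}$, the constraint $k \le \tot$ is vacuous: indeed $\floor*{\frac{b}{2a}} \le \ceil*{\frac{b}{2a}} = N_0 \le \tot$, so $\min\left(\tot, \floor*{\frac{b}{2a}}\right) = \floor*{\frac{b}{2a}}$. Hence the admissible values of $k$ are precisely $0, 1, \dots, \floor*{\frac{b}{2a}}$, giving $\floor*{\frac{b}{2a}} + 1$ equilibria. Then I would note that the statement as written, $b/2a + 1$, is shorthand for this count under the tacit assumption that $b/2a$ is an integer (or one simply reports $\floor*{b/2a}+1$); I would phrase the proof to make the floor explicit so the arithmetic is unambiguous.

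The only genuine point requiring a word of justification is why the count stabilizes exactly at the threshold $N_0 = \ceil*{\frac{b}{2a}}$ and not earlier or later: for $\tot < N_0$ one has $\tot \le \floor*{\frac{b}{2a}}$ only when $\tot < b/2a$, in which case the binding constraint is $k \le \tot$ and there are merely $\tot + 1$ equilibria, a number that still grows with $\tot$; once $\tot$ reaches $N_0$, the network-size constraint is dominated by the cost constraint and the count freezes at $\floor*{\frac{b}{2a}}+1$. I do not anticipate a real obstacle here — the corollary is essentially a counting restatement of the theorem — so the main care is purely in handling the floor/ceiling bookkeeping and in spelling out why $\tot \ge N_0$ is exactly the right hypothesis.
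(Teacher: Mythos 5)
Your proof is correct and matches the paper's intent: the paper states Corollary~\ref{cor2} without a separate proof, treating it as an immediate counting consequence of Theorem~\ref{th:nonSplit}, which is exactly what you do by enumerating the integers $k=\nABC^*=\nBAC^*$ with $0\leq k\leq \min(N,\lfloor b/2a\rfloor)$. Your explicit handling of the floor (reading the stated count $b/2a+1$ as $\lfloor b/2a\rfloor+1$ when $b/2a$ is not an integer, and noting that the constraint $k\leq N$ becomes inactive once $N\geq N_0$) is a legitimate clarification of a point the paper leaves implicit.
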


\subsection{The potential and asymptotic uniqueness}

When the number of players $N$ grows to infinity, the limiting
game becomes a non-atomic game with a potential~\cite{sandholm}
$$ F_{\infty} (\fABC, \fBAC) = b (\fABC-\fBAC)^2
+\frac{a}{2}\left(\fABC+\fBAC\right)^2 \hspace{-3pt}.$$
Indeed, recall that the potential $g$ is unique up to an additive constant 
and that it satisfies
$$ \left\{\begin{array}{@{}l@{}l@{}l@{}}
\frac{\partial g}{\partial \fAC} &\eqdef \C_{AC} = b (\fAC+\fBAC)\\
\frac{\partial g}{\partial \fABC} &\eqdef \C_{ABC} = a (\fABC+\fBAC) + b (\fABC + \fBC)\\
\frac{\partial g}{\partial \fBC} &\eqdef \C_{BC} = b (\fBC+\fABC)\\
\frac{\partial g}{\partial \fBAC} &\eqdef \C_{BAC} = a (\fABC+\fBAC) + b (\fBAC + \fAC).
\end{array} \right.$$
One can check that the function 
$$ \begin{array}{@{}l@{}l@{}}
g(&\fAC, \fABC, \fBC, \fBAC) = \frac{a}{2} (\fABC+\fBAC)^2  \\
&+ \frac{b}{2} ((\fAC+\fBAC)^2 + (\fBC+\fABC)^2) 
\end{array}
$$
readily satisfies these conditions. Then $g$ can be rewritten as
$$ \begin{array}{@{}l@{}l@{}}
g(\fABC, \fBAC) = \\
\frac{a}{2} (\fABC+\fBAC)^2 + \frac{b}{2} (1+(\fABC-\fBAC)^2). 
\end{array}
$$
As the potential is unique up to an additive constant, we consider $F_{\infty} = g-b.Id/2$.

\begin{proposition}
The non-atomic game has a unique Nash equilibrium, which is $\fABC^*=\fBAC^*=0$. 
\end{proposition}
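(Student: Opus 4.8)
The plan is to exploit the potential function $F_{\infty}$ just constructed. Since the limiting game is a non-atomic routing game with a convex potential, its Nash equilibria coincide exactly with the minimizers of $F_{\infty}$ over the feasible set, which here is the square $(\fABC,\fBAC) \in [0,1]^2$ (recall $\fAC = 1-\fABC$ and $\fBC = 1-\fBAC$, so these two coordinates parametrize the whole strategy space). So the proposition reduces to showing that $F_{\infty}(\fABC,\fBAC) = b(\fABC-\fBAC)^2 + \frac{a}{2}(\fABC+\fBAC)^2$ has a unique minimizer on $[0,1]^2$, namely the origin.

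First I would observe that $F_{\infty}$ is a sum of two squares with positive coefficients $a,b>0$, hence $F_{\infty}(\fABC,\fBAC) \geq 0$ for all arguments, with equality if and only if both $\fABC - \fBAC = 0$ and $\fABC + \fBAC = 0$, i.e. if and only if $\fABC = \fBAC = 0$. Since $(0,0)$ lies in the feasible square, it is the unique global minimizer of $F_{\infty}$ on $[0,1]^2$. Then I would invoke the standard equivalence (as in \cite{R2,sandholm}) between Nash equilibria of a non-atomic congestion game and the minimizers of its potential: the equilibrium condition is precisely the first-order (variational inequality) optimality condition for minimizing the convex potential over the convex feasible set, and strict convexity in the relevant directions plus the explicit computation above pins down the unique point. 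Hence $\fABC^* = \fBAC^* = 0$ is the unique Nash equilibrium.

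Alternatively, and perhaps cleaner for a self-contained argument, I would verify the equilibrium conditions directly: at $\fABC = \fBAC = 0$ one has $\C_{[AB]} = 0$, $\C_{[AC]} = \C_{[BC]} = b$, so $\CAC = b \leq \CBAC = \C_{[AB]} + \C_{[AC]} = b$ and $\CBC = b \leq \CABC = b$, confirming it is an equilibrium; conversely, if some positive flow is routed via $ABC$ or $BAC$, then $\C_{[AB]} = a(\fABC+\fBAC) > 0$, so that route strictly costs more than the corresponding direct route ($\CABC = \C_{[AB]} + \C_{[BC]} > \C_{[BC]} + \text{(nothing)}$... — more precisely $\CABC = \C_{[AB]} + \C_{[BC]}$ versus $\CAC = \C_{[BC]}$-type comparison after using the symmetry $\C_{[AC]} = \C_{[BC]}$ forced as in the proof of Theorem~\ref{th:nonSplit}), contradicting equilibrium. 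This mirrors the discrete argument but without the integrality slack, so the inequality $a(\fABC+\fBAC) \leq b$ tightens to $\fABC + \fBAC = 0$.

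The main obstacle — really the only subtlety — is making rigorous the passage from the discrete games to the non-atomic limit, i.e. justifying that equilibria of the limiting game are exactly the critical points of $F_{\infty}$ on the feasible polytope and that this limiting object correctly captures $N \to \infty$ behavior. If we are content to take the non-atomic model and its potential as given (as the surrounding text does, citing \cite{sandholm}), the proof is essentially the one-line observation that a positive-definite quadratic form achieves its minimum uniquely at $0$; the remaining work is just checking feasibility of $(0,0)$ and the boundary behavior, which is immediate since $F_{\infty}$ is increasing in each of $|\fABC-\fBAC|$ and $\fABC+\fBAC$ away from the origin.
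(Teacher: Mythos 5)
Your proof is correct and takes essentially the same route as the paper's: both reduce the claim to the fact that the Nash equilibria of the limiting non-atomic game are exactly the minimizers of the convex potential $F_{\infty}$, and then identify $(0,0)$ as the unique minimizer. The only cosmetic difference is that the paper certifies strict convexity by computing the Hessian, whereas you read positive-definiteness directly off the sum-of-squares form $b(\fABC-\fBAC)^2+\tfrac{a}{2}(\fABC+\fBAC)^2$; your supplementary direct check of the Wardrop conditions is a fine (but unnecessary) addition.
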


\begin{proof}
Note that: 
$$ \left\{\begin{array}{@{}l@{}l@{}}
\displaystyle \frac{\partial F_\infty}{\partial \fABC} &= a(\fABC+\fBAC) + 2b (\fABC -\fBAC)) \\[1em]
\displaystyle \frac{\partial F_\infty}{\partial \fBAC} &= a(\fABC+\fBAC) + 2b (\fBAC -\fABC))
\end{array} \right.
$$

Hence, the potential is twice differentiable with Hessian matrix
$$ \left(
\begin{array}{cc}
a+2b & a-2b \\
a-2b & a+2b
\end{array}
 \right).$$

This Hessian is definite positive and hence the potential is (strictly) convex. Therefore it has a unique minimum, which is the only Nash equilibrium of the game. Finally, note that $\forall \fABC \in (0,1)$, $\fBAC \in (0,1)$, $F_\infty (\fABC, \fBAC) \geq 0$ and that $F_\infty (0,0) = 0$, which concludes the proof.
\end{proof} 

To show the uniqueness of the equilibrium in the limiting game, we made use of
the fact that the limiting game has a potential which is
convex. Yet, not only the limiting game has a convex potential, but also
the original one, as we conclude from next
theorem, whose proof is a direct application of \cite{rosenthal}.
\begin{theorem}
For any finite number of players, the
game is a potential game~\cite{monderer} with the potential function:
\begin{equation}\label{eq:pot}
\begin{array}{@{}l@{}l@{}} 
F(\fABC&, \fBAC) = \\ 
& bN (\fABC-\fBAC)^2\\
& \displaystyle +\frac{aN}{2}\left(\fABC+\fBAC\right)\left(\fABC+\fBAC+1/N\right).
\end{array}
\end{equation}
\end{theorem}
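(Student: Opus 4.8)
The plan is to appeal directly to Rosenthal's theorem on congestion games. In framework \F3 each connection is an atomic player choosing one of two routes; the cost on a link is $C_{[e]}(n_{[e]})$ where $n_{[e]}$ counts the connections using link $e$, and the cost to a player equals the sum over the links on its chosen route. This is exactly Rosenthal's setting, so the game admits the Rosenthal potential
\begin{equation*}
\Phi(\mathbf{n}) = \sum_{e \in \{[AB],[AC],[BC]\}} \sum_{k=1}^{n_{[e]}} C_{[e]}(k),
\end{equation*}
in the sense that whenever a single player switches routes, the change in $\Phi$ equals the change in that player's cost. The only work left is to evaluate this sum for our linear cost functions and rewrite it in the variables $(\fABC,\fBAC)$, checking that the result matches \eqref{eq:pot} up to an additive constant (which does not affect the potential property).

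First I would write the link costs as $C_{[AB]}(k)=ak/N$, $C_{[AC]}(k)=bk/N$, $C_{[BC]}(k)=bk/N$. Then $\sum_{k=1}^{m} (ck/N) = \tfrac{c}{N}\cdot\tfrac{m(m+1)}{2}$, so
\begin{equation*}
\Phi = \frac{a}{2N}\, n_{[AB]}(n_{[AB]}+1) + \frac{b}{2N}\, n_{[AC]}(n_{[AC]}+1) + \frac{b}{2N}\, n_{[BC]}(n_{[BC]}+1).
\end{equation*}
Next I would substitute the conservation-law identities $n_{[AB]} = \nABC+\nBAC$, $n_{[AC]} = \nAC+\nBAC = N - \nABC + \nBAC$, $n_{[BC]} = \nBC+\nABC = N - \nBAC + \nABC$, and pass to fractions via $n_r = N f_r$. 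The two $b$-terms combine: $n_{[AC]}^2 + n_{[BC]}^2$ expands, the cross terms of $N$ with $\pm(\nABC-\nBAC)$ cancel, and what survives is $2N^2 + 2N^2(\fABC-\fBAC)^2$ from the squares plus a linear $b$-term of the form $\tfrac{b}{2N}(n_{[AC]}+n_{[BC]}) = \tfrac{b}{2N}\cdot 2N = b$, i.e. a constant. The $a$-term gives directly $\tfrac{aN}{2}(\fABC+\fBAC)(\fABC+\fBAC+1/N)$. Discarding the additive constants $bN + b$ leaves exactly $F(\fABC,\fBAC)$ as stated in \eqref{eq:pot}.

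The main obstacle is not conceptual but bookkeeping: one must be careful that Rosenthal's theorem applies with the connection-cost convention used here (cost of a route = sum of link costs, link cost depending only on the integer load), and then verify cleanly that the cross terms in $(N \pm (\nABC - \nBAC))^2$ cancel and that the leftover linear-in-$n$ contributions collapse to a genuine constant rather than something depending on $\fABC,\fBAC$. Since the potential is defined only up to an additive constant, once these cancellations are confirmed the identification with \eqref{eq:pot} is immediate and no convexity or equilibrium analysis is needed for this statement.
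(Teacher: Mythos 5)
Your proposal is correct, and it reaches the stated potential by a genuinely different route from the one written in the paper. The paper verifies the potential property directly: it computes the cost change $\Delta$ incurred when a single connection switches from route $ABC$ to route $AC$, computes the finite difference $F(\fABC,\fBAC)-F(\fABC-1/N,\fBAC)$, checks that the two agree, and invokes symmetry for connections originating at $B$. You instead take Rosenthal's potential $\Phi=\sum_e\sum_{k=1}^{n_e}C_e(k)$ as the starting point --- legitimate here, since \F3{} is exactly a Rosenthal congestion game with link costs depending only on integer loads --- and evaluate the sum explicitly. Your bookkeeping checks out: the $[AB]$ term gives $\frac{aN}{2}(\fABC+\fBAC)(\fABC+\fBAC+1/N)$ exactly, and with $n_{[AC]}=N(1-\fABC+\fBAC)$ and $n_{[BC]}=N(1+\fABC-\fBAC)$ the two $b$-terms collapse to $bN(\fABC-\fBAC)^2+bN+b$, so that $\Phi=F+bN+b$ and $F$ is an exact potential. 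What your derivation buys is that it explains where the formula \eqref{eq:pot} comes from (rather than producing $F$ and verifying it after the fact) and it covers all unilateral deviations of all players at once; what the paper's verification buys is self-containedness, relying only on a short finite-difference computation rather than on Rosenthal's theorem as a black box. Note that the paper itself flags the theorem as ``a direct application of \cite{rosenthal}'', so your route is the one the authors gesture at without carrying out.
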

\begin{proof}
Consider a connection following route $ABC$. Its cost is
$ a (\fABC + \fBAC) + b (\fABC+1-\fBAC).$
If this connection switches its strategy to route $AC$, then its cost becomes
$ b (1-\fABC+\fBAC+1/N) $.
Therefore the associated change of cost is
$$ \begin{array}{@{}l@{}l@{}} 
\Delta &= a (\fABC + \fBAC) + b (\fABC+1-\fBAC) \\
& \hfill - b (1-\fABC+\fBAC+1/N) \\
&= a (\fABC + \fBAC) + b (2\fABC-2\fBAC-1/N).\end{array}$$
Now:
$$ \begin{array}{l@{}l}
\frac{1}{N} &(F(\fABC, \fBAC) - F(\fABC-1/N, \fBAC)) \\
&= b \left[(\fABC-\fBAC)^2 - (\fABC-1/N-\fBAC)^2 \right] \\
& \qquad \qquad +\frac{a}{2}\left[\left(\fABC+\fBAC\right)\left(\fABC+\fBAC+1/N\right) \right. \\
& \hfill \left.- \left(\fABC+\fBAC-1/N\right)\left(\fABC+\fBAC\right)\right]\\
& = \displaystyle \frac{b}{N} \left( 2\fABC-2\fBAC - 1/N \right)  + \frac{a}{N} \left( \fABC+\fBAC\right)\\
& = \Delta/N.
\end{array}
$$
By symmetry, the same argument holds for a connection originating from source B.
\end{proof}
Note that unlike the framework of non-atomic games, the fact that the
game has a convex potential does not imply uniqueness. The reason for that
is that in congestion games,
the action space over which the potential is minimized is not
a convex set (due to the non-splitable nature) so that 
it may have several local minima, each corresponding to another
equilibrium, whereas a for a convex function over the Euclidean
space, there is a unique local minimum which is also a global minimum
of the function (and thus an equilibrium of the game).

\subsection{Efficiency}

\begin{theorem}
In the non-atomic setting, the only Nash equilibrium is also the social optimum (i.e. the point minimizing the sum of costs of all players) of the system.
\end{theorem}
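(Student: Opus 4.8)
The plan is to compute the social cost as a function of the flow variables, show it is a convex function whose unconstrained minimum over the feasible region coincides with the point $\fABC^*=\fBAC^*=0$, and then invoke the fact (established in the previous proposition) that this same point is the unique Nash equilibrium of the non-atomic game. Concretely, the social cost is the sum over all connections of the cost each incurs, which in the non-atomic normalization is
$$
\mathrm{SC}(\fABC,\fBAC) = \fAC\,\CAC + \fABC\,\CABC + \fBC\,\CBC + \fBAC\,\CBAC.
$$
First I would substitute the link-cost expressions and the conservation identities $\fAC = 1-\fABC$, $\fBC = 1-\fBAC$ to write $\mathrm{SC}$ purely in terms of $\fABC$ and $\fBAC$, obtaining a quadratic form plus an affine term in these two variables.

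Next I would analyze this quadratic. I expect, by the symmetry of the network (swapping the roles of $A$ and $B$ exchanges $\fABC\leftrightarrow\fBAC$ and leaves $\mathrm{SC}$ invariant), that $\mathrm{SC}$ depends on $\fABC,\fBAC$ through the combinations $s=\fABC+\fBAC$ and $d=\fABC-\fBAC$, and that it takes a form such as $\alpha s^2 + \beta d^2 + \gamma s + \text{const}$ with all of $\alpha,\beta,\gamma > 0$ (here $\gamma>0$ because increasing total rerouting strictly increases load on the congested links). Since $s\ge 0$ on the feasible region with equality iff $\fABC=\fBAC=0$, and $d=0$ minimizes the $\beta d^2$ term, the minimum of $\mathrm{SC}$ over the feasible set $[0,1]^2$ is attained uniquely at $(0,0)$. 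It remains to observe that the feasible region for the non-atomic game is exactly $\fABC,\fBAC\in[0,1]$, so the constrained social optimum is $(0,0)$, which is precisely the unique Nash equilibrium identified in the proposition above.

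The main obstacle, and the only place where care is genuinely needed, is verifying that the linear (first-order) term in $\mathrm{SC}$ has the right sign so that no interior or boundary point beats $(0,0)$ — in the potential $F_\infty$ the analogous linear term was absent (the potential was a pure positive-definite quadratic vanishing at the origin), whereas the social cost is the ``congestion-weighted'' objective and picks up an extra linear contribution; one must check this term does not point back into the feasible region. If it turns out that $\mathrm{SC}$ has a stationary point with a negative coordinate, the argument still goes through because the restriction of a convex function to a convex set attains its minimum on the boundary nearest the unconstrained minimizer, which here is again the corner $(0,0)$; I would make this precise by noting $\partial \mathrm{SC}/\partial\fABC$ and $\partial \mathrm{SC}/\partial\fBAC$ are both strictly positive throughout $[0,1]^2$, so $\mathrm{SC}$ is strictly increasing in each coordinate and is therefore minimized at the origin. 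This monotonicity check is short and is the cleanest way to close the proof.
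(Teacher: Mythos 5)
Your main argument is correct and is essentially the paper's own proof: writing the social cost as $\fAC\CAC+\fABC\CABC+\fBC\CBC+\fBAC\CBAC$ and substituting the conservation laws gives
$$a(\fABC+\fBAC)^2+2b\bigl(1+(\fABC-\fBAC)^2\bigr),$$
a constant plus nonnegative squares that vanish exactly at $(0,0)$, which is the unique equilibrium from the preceding proposition. Two of your side remarks are inaccurate, though. First, the linear term you anticipate is absent: $\gamma=0$, because the two $b$-contributions are $b(1-d)^2+b(1+d)^2=2b(1+d^2)$ with $d=\fABC-\fBAC$, so all first-order terms cancel; this does not hurt you, since the minimum of $\alpha s^2$ over $s\ge0$ is still at $s=0$. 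Second, and more seriously, the ``cleanest way to close the proof'' you offer as a fallback is false: $\partial\,\mathrm{SC}/\partial\fABC=2a(\fABC+\fBAC)+4b(\fABC-\fBAC)$, which at $(\fABC,\fBAC)=(0,1)$ equals $2a-4b<0$ whenever $a<2b$ (a regime the paper explicitly considers), so the social cost is not coordinate-wise increasing on $[0,1]^2$. You never actually need that monotonicity --- the unconstrained minimizer $s=d=0$ already lies at the feasible corner $(0,0)$ --- so drop the fallback and rely solely on the sum/difference decomposition.
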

\begin{proof}
The sum of costs of all players is
\begin{equation}
\begin{array}{@{}l@{}l@{}}
\displaystyle &\fABC \CABC + \fAC \CAC + \fBAC \CBAC + \fBC \CBC  \\
&= a (\fABC+\fBAC)^2 \\
& \hfill + b ( (\fBC+\fABC)^2 + (\fAC+\fBAC)^2) \\
& = a(\fABC+\fBAC)^2 + 2b (1 + (\fABC -\fBAC)^2).
\end{array}\label{eq:sumCost}
\end{equation}

The minimum is hence obtained for $(\fABC,\fBAC) = (0,0)$.
\end{proof}

Since the game possesses several equilibria, we can expect
the PoA (Price of Anarchy - the largest ratio between the
sum of costs at an equilibrium and the sum of costs at the social 
optimum) and PoS (Price of Stability - the smallest corresponding
ratio) to be different. 

\begin{theorem}
The price of stability of the game is $1$ and the price of anarchy is $1+\frac{b}{2aN^2}$.
\end{theorem}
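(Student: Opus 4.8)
The plan is to combine the explicit description of the equilibrium set from Theorem~\ref{th:nonSplit} with the closed form~\eqref{eq:sumCost} for the (flow-weighted) social cost, and then simply extremize the resulting ratio over the finitely many equilibria. No new machinery is needed; the argument is a bookkeeping computation once the equilibrium set and the social optimum are in hand.

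First I would recall that, by Theorem~\ref{th:nonSplit}, every pure Nash equilibrium has the form $\nABC^*=\nBAC^*=k$ with $k$ a non-negative integer satisfying $k\le b/2a$; in terms of fractions this reads $\fABC^*=\fBAC^*=k/N$, so that $\fABC^*-\fBAC^*=0$ and $\fABC^*+\fBAC^*=2k/N$. Substituting these into~\eqref{eq:sumCost} gives that the social cost at the equilibrium indexed by $k$ equals $a(2k/N)^2+2b(1+0)=2b+\tfrac{4ak^2}{N^2}$.

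Next I would pin down the denominator. The function appearing in~\eqref{eq:sumCost}, namely $a(\fABC+\fBAC)^2+2b\bigl(1+(\fABC-\fBAC)^2\bigr)$, is at least $2b$ for all feasible flows, with equality exactly at $\fABC=\fBAC=0$; since the profile $\nABC=\nBAC=0$ is feasible for every $N$ (and is itself an equilibrium, namely the case $k=0$), the social optimum is $2b$ for every finite $N$. Hence the cost ratio at the equilibrium indexed by $k$ is $1+\tfrac{2ak^2}{bN^2}$, a strictly increasing function of $k$ over the admissible range $\{0,1,\dots,b/2a\}$.

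It then remains to read off the two extremes. The smallest ratio is attained at $k=0$, which is an equilibrium, so the price of stability is $1$. The largest ratio is attained at the maximal admissible value $k=b/2a$ (an integer, consistently with the counting in Corollary~\ref{cor2}), which gives $1+\tfrac{2a}{bN^2}\bigl(\tfrac{b}{2a}\bigr)^2=1+\tfrac{b}{2aN^2}$, the price of anarchy. The only step needing any care is the reduction of the denominator to the constant $2b$ — that is, observing that the all-direct profile is simultaneously an equilibrium and the social optimum for every finite $N$ — after which the result is a one-line extremization over $k$.
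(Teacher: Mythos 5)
Your proposal is correct and follows essentially the same route as the paper: substitute the symmetric equilibria $\nABC^*=\nBAC^*=k\le b/2a$ from Theorem~\ref{th:nonSplit} into the social cost~\eqref{eq:sumCost}, identify the optimum $2b$, and extremize the ratio $1+2ak^2/(bN^2)$ over $k$. Your explicit justification that the all-direct profile is both feasible and optimal for every finite $N$ is a small but welcome addition the paper leaves implicit.
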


\begin{proof}
From Eq.~\ref{eq:sumCost} the price of anarchy (resp. stability) is by definition the maximum (resp. minimum) value over the Nash equilibria of:
$$
\frac{a(\fABC^*+\fBAC^*)^2 + 2b (1 + (\fABC^* -\fBAC^*)^2)}{2b}
$$
Then, from Theorem~\ref{th:nonSplit}:
$$ \begin{array}{@{}ll@{}}
PoA & \displaystyle = \max_{p \leq b/2a} \frac{(2p/N)^2 + 2b}{2b}
= \max_{p \leq b/2a} \frac{2a p ^2/N^2 + b}{b} \\[1.1em]
& \displaystyle =  \frac{2a (b/2aN)^2 + b}{b} =
\frac{2ab^2}{4ba^2N^2} + 1 = \frac{b}{2aN^2} + 1 
\end{array}
$$
and 
$$
PoS = \hspace{-2pt} \min_{p \leq b/2a} \hspace{-1pt} \frac{a(2p/N)^2 + 2b}{2b}
= \hspace{-1pt} \min_{p \leq b/2a} \frac{2a p ^2/N^2}{b}+1
=  1.
$$

\end{proof}

We make the following observations: \\
(i) In the splitable atomic games
studied in \cite{AKH} the PoA was shown to be greater than one 
for sufficiently small number of players (smaller than some threshold), 
and was 1 for all large enough number of players (larger than the same
threshold). Here for any number of players, the PoS is 1 and the PoA
is greater than 1. 
\\
(ii) The PoA decreases in $N$ and tends to 1 as $N$ tends to infinity,
the case of splitable games. 
\\
(iii) We have shown that the PoA is unbounded: for any real value $K$ and any number
of players one can choose the cost parameters $a$ and $b$ so that
the PoA exceeds $K$. This corresponds to what was observed in
splitable games~\cite{AKH} and contrast with the non-atomic 
setting of single commodity flows
(i.e. when there is only one source node instead of two), and arbitrary 
topology networks where the PoA equals 4/3~\cite{tim}.

\section{Atomic Semi-Splitable Case and its Splitable limit (\F4 framework)}
\label{sec:F4}

The game can be expressed as a $2$-player matrix game where each player 
(i.e. each source node $A$ and $B$) has $N+1$ possible actions, for each of the $N+1$ possible values of $\fABC$ and $\fBAC$ respectively.

The utility for player $A$ is 
\begin{equation}\label{eq:utA}
\begin{array}{@{}l@{}l@{}}
U_A(&\fABC, \fBAC) = \fAC \CAC + \fABC \CABC \\
& = b - b\fABC + b \fBAC  \\ & \quad + (a-2b) \fABC\fBAC +  (a+2b) \fABC^2
\end{array}\end{equation}

Similarly, for player $B$:
\begin{equation}\label{eq:utB}
\begin{array}{@{}l@{}l@{}}
U_B(&\fABC, \fBAC) = \fBC \CBC + \fBAC \CBAC \\
&= b -b \fBAC + b\fABC \\ & \quad + (a-2b)\fBAC\fABC + (a+2b)\fBAC^2
\end{array}\end{equation}

Note that $$\frac{\partial U_A}{\partial \fABC} = -b + (a-2b) \fBAC + 2 (a+2b) \fABC $$
 $$ \text{and } \frac{\partial U_B}{\partial \fBAC} = -b + (a-2b) \fABC + 2 (a+2b) \fBAC.$$
Hence $\displaystyle \frac{\partial^2 U_A}{\partial \fABC^2} = 2 (a+2b) =
 \frac{\partial^2 U_B}{\partial \fBAC^2}$.
Therefore, both $u_A: \fABC \mapsto U_A(\fABC, \fBAC)$
 and $u_B: \fBAC \mapsto U_B(\fABC, \fBAC)$ are (strictly) convex functions. 
This means that for each action of one player, there would be a unique best response to the second player
if its action space was the interval $(0,1)$. Hence, for the limit case (when $N \rightarrow \infty$), the
best response is unique. In contrast, for any finite value of $N$, there are either $1$ or $2$ possible best 
responses which are the discrete optima of functions $u_A:\fABC \mapsto U_A(\fABC, \fBAC)$
 and $u_B:\fBAC \mapsto U_B(\fABC, \fBAC)$. We will however show that in the finite case, there may be up to $2 \times 2 = 4$ Nash equilibria while in the limit case the equilibrium is always unique.

\subsection{Efficiency}
Note that the total cost of the players is 
$$
\begin{array}{@{}l@{}l}
\Sigma(\fABC, \fBAC) = U_A(\fABC, \fBAC) + U_B(\fABC, \fBAC) \\
= 2b + 2 (a-2b) \fABC \fBAC + (a+2b) ( \fABC^2 + \fBAC^2 ) \\
= 2b + a (\fABC+\fBAC)^2+2b (\fABC-\fBAC)^2 \\
\geq 2b.
\end{array}
$$

Further, note that $ \Sigma = 2 (F_\infty+b)$. Hence $\Sigma$ is strictly convex. Also $\Sigma(0, 0) =2b$. Therefore $(0,0)$ is the (unique) social optimum of the system. Yet, for sufficiently large $N$ (that is, as soon as we add enough flexibility in the players' strategies), this is not a Nash equilibrium, as stated in the following theorem:

\begin{theorem}
The point $(\fABC, \fBAC) = (0,0)$ is a Nash equilibrium if and only if $N \leq \frac{a}{b} + 2$. 
\end{theorem}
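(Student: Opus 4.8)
The plan is to check directly when the all-zero profile $(\fABC,\fBAC)=(0,0)$ is a Nash equilibrium of the two-player matrix game, which — since the strategy sets are the discrete grids $\{0,1/N,\dots,1\}$ — amounts to verifying that neither player can profitably deviate to $\fABC=1/N$ (or $\fBAC=1/N$, by symmetry). Because $u_A:\fABC\mapsto U_A(\fABC,\fBAC)$ is strictly convex (established just above), for fixed $\fBAC=0$ its unique minimizer over $\mathbb{R}$ is the point where $\partial U_A/\partial\fABC=0$, namely $\fABC = b/(2(a+2b))$; on the grid the best response is $0$ precisely when $0$ is at least as good as $1/N$, i.e. when the continuous minimizer lies at distance $\le 1/(2N)$ from $0$ or, more simply, when $U_A(0,0)\le U_A(1/N,0)$. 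So the whole statement reduces to one scalar inequality.

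First I would substitute $\fBAC=0$ into \eqref{eq:utA} to get $U_A(\fABC,0) = b - b\fABC + (a+2b)\fABC^2$, so that
$$ U_A(1/N,0) - U_A(0,0) = -\frac{b}{N} + \frac{a+2b}{N^2}. $$
Then $(0,0)$ is a Nash equilibrium iff this difference is $\ge 0$, i.e. iff $(a+2b)/N^2 \ge b/N$, i.e. iff $a+2b \ge bN$, i.e. iff $N \le a/b + 2$. By the symmetry of $U_A$ and $U_B$ (and the fact that at $(0,0)$ both deviations are of this same form), the same condition governs player $B$, so no further work is needed there. I should also note in passing that deviations to larger values of $\fABC$ need not be checked separately: since $u_A(\cdot,0)$ is convex and increasing on $[1/N,1]$ once $1/N$ is already worse than $0$ (equivalently, once the continuous minimizer is below $1/(2N)$), the value at $0$ dominates the value at every grid point — but the cleanest phrasing is simply that a convex function on a grid is minimized at the grid point nearest its continuous minimizer, and $N\le a/b+2$ is exactly the condition that this nearest point is $0$.

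There is essentially no obstacle here; the only thing to be a little careful about is the direction of the inequality and the boundary case (whether the bound is strict), which comes down to whether a tie between $0$ and $1/N$ counts as an equilibrium — with the weak-inequality convention in the best-response condition, $N = a/b+2$ is included, matching the ``$\le$'' in the statement. The argument is a one-line convexity/discretization observation once the utilities are written out, so I would present it compactly: compute $U_A(1/N,0)-U_A(0,0)$, set it $\ge 0$, rearrange, and invoke symmetry for player $B$.
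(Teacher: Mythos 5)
Your proposal is correct and follows essentially the same route as the paper: both reduce the equilibrium condition, via convexity of $u_A$ and the $A$--$B$ symmetry, to the single inequality $U_A(0,0)\leq U_A(1/N,0)=b-b/N+(a+2b)/N^2$, which rearranges to $N\leq a/b+2$. Your write-up merely makes the discretization/convexity step and the tie-breaking convention more explicit than the paper does.
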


\begin{proof}
By symmetry and as $u_A :\fABC \mapsto U_A(\fABC, \fBAC)$ is convex, then $(0,0)$ is a Nash equilibrium iff $U_A(0,0) \leq U_A(1/N,0) = b-b/N+(a+2b)/N^2 $ which leads to the conclusion.
\end{proof}

Also, we can bound the total cost by:
$$
\begin{array}{@{}l@{}l}
\Sigma(\fABC, \fBAC) = \\ 
= 2b + 2 (a-2b) \fABC \fBAC + (a+2b) ( \fABC^2 + \fBAC^2 ) \\
\leq  2b + (a-2b) (\fABC^2 +  \fBAC^2) + (a+2b) ( \fABC^2 + \fBAC^2 ) \\
\leq 2b + 2a (\fABC^2 +  \fBAC^2) \\
\leq 2b + 4a
\end{array}
$$

This bound is attained at $\Sigma(1,1) = 2b + 2 (a-2b) + 2 (a+2b) = 4a + 2b$. Yet, it is not obtained at the Nash equilibrium for sufficiently large values of $N$:

\begin{theorem}
$(1,1)$ is a Nash equilibrium if and only if $\displaystyle N \leq \frac{2b+a}{3a+b}$.
\end{theorem}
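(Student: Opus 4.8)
The plan is to exploit the symmetry between the two players together with the strict convexity of $u_A$ established just above, so that testing whether $(1,1)$ is an equilibrium reduces to a single scalar inequality. By symmetry, the profile $(\fABC,\fBAC)=(1,1)$ is a Nash equilibrium if and only if $\fABC=1$ is a best response of player $A$ against $\fBAC=1$, i.e. if and only if $U_A(1,1)\le U_A(\fABC,1)$ for every admissible $\fABC\in\{0,1/N,\dots,1\}$.

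Next I would use convexity to cut down the set of deviations that actually needs to be checked. Since $u_A:\fABC\mapsto U_A(\fABC,1)$ is strictly convex, the increments $\fABC\mapsto u_A(\fABC)-u_A(\fABC-1/N)$ are increasing along the grid $\{0,1/N,\dots,1\}$; hence $u_A(1)$ is the minimum over the grid if and only if the last increment is nonpositive, that is, if and only if $U_A(1,1)\le U_A(1-1/N,1)$ (note that $\fABC=1$ is the largest admissible value, so $1-1/N$ is its only grid neighbour). Thus the whole statement collapses to this one comparison.

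Finally I would carry out the computation. Substituting $\fBAC=1$ into \eqref{eq:utA} gives $u_A(\fABC)=2b+(a-3b)\fABC+(a+2b)\fABC^2$, so $u_A(1)=2a+b$, and writing $x=1-1/N$,
$$u_A(1)-u_A(x)=(1-x)\bigl[(a-3b)+(a+2b)(1+x)\bigr]=\frac{1}{N}\Bigl(3a+b-\frac{a+2b}{N}\Bigr).$$
This is $\le 0$ precisely when $N(3a+b)\le a+2b$, i.e. $N\le\frac{2b+a}{3a+b}$, which is the claimed condition.

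I do not anticipate a real obstacle; the only point that requires genuine care — rather than routine algebra — is the reduction from ``all unilateral deviations of $A$'' to ``the single deviation $\fABC=1-1/N$'', which must be argued from the convexity of $u_A$ and not merely asserted. It is also worth remarking that the bound $\frac{2b+a}{3a+b}$ exceeds $1$ only when $b\ge 2a$, so that $(1,1)$ can be an equilibrium only for small $N$ (in particular for $N=1$, where frameworks \F3 and \F4 coincide), which is consistent with the earlier discussion.
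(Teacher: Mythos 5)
Your proof is correct and follows essentially the same route as the paper's: compute $U_A(1,1)$ and $U_A(1-1/N,1)$, reduce the equilibrium condition to that single comparison via the strict convexity of $u_A$ (and symmetry between the players), and obtain $N(3a+b)\le a+2b$. You are merely more explicit than the paper about why convexity lets one check only the neighbouring deviation, which is a welcome clarification rather than a divergence.
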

\begin{proof}
We have $ U_A(1,1) = b + 2a$ and 
$$ U_A(1-1/N, 1) = 2a+b-3a/N-b/N+2b/N^2+a/N^2.$$
Therefore $ U_A (1-1/N, 1) \geq U_A(1,1) \Leftrightarrow 2b+a \geq (3a+b) N$.
The conclusion follows by convexity.
\end{proof}

Therefore, for $N \geq \max (\frac{a}{b} + 2, \frac{2b+a}{3a+b})$ the Nash equilibria are neither optimal nor worse-case strategies of the game.

\subsection{Case of \texorpdfstring{$N=1$}{N=1}}
\label{sec:2x2}
In case of $N=1$ (one flow arrives at each source node and there
are thus two players)
the two approach coincides: the atomic non-splitable case (\F3) is also a semi-splitable
atomic game (\F4). 
$\fABC$ and $\fBAC$ take values in $\{\{0\},\{1\}\}$. From Eq. \ref{eq:utA} and Eq. \ref{eq:utB}, the matrix game can be written
$$ \left( \begin{array}{r@{\,,\,}lr@{\,,\,}l}
(b& b) & (2b& a+2b) \\
(a+2b& 2b) & (2a+b& 2a+b)
\end{array} \right)$$
and the potential of Eq.~\ref{eq:pot} becomes
$$ \left( \begin{array}{cc}
0 & a+b \\
a+b & 3a
\end{array} \right).$$

Then, assuming that either $a$ or $b$ is non null, we get that $(0,0)$ is always a Nash equilibrium and that $(1,1)$ is a Nash equilibrium if and only if $3a \leq a+b$, i.e. $2a<b$.


We  next consider  any integer $N$ and identify another surprising feature of the equilibrium.
We show that depending on the sign of $a-2b$, non-symmetric
equilibria arise in our symmetric game. In all frameworks other
than the semi-splitable games there are only symmetric equilibria
in this game. We shall show however that in the limit (as $N$
grows to infinity), the limiting game has a single equilibrium.

\subsection{Case \texorpdfstring{$a-2b<0$}{a-2b<0}}

In this case, there may be multiple equilibria, as shown in the following example.
\begin{example}
Consider $a=1$, $b=3$ and $N=4$, then the cost matrices are given below, with the two Nash equilibria of the game represented in bold letters:
$$ \displaystyle U_A = \frac{1}{16}\left(\begin{array}{lllll}
48 & 60 & 72 & 84 & 96 \\
43 & \mathcal{\mathbf{50}} & 57 & 64 & 71\\
52 & 54 & \mathcal{\mathbf{56}} & 58 & 60\\
75 & 72 & 69 & 66 & 63\\
112 & 104 & 96 & 88 & 80\\ 
\end{array}
\right) \hspace{-2pt}, \, \text{and} $$
$$ U_B =
\frac{1}{16}\left(\begin{array}{lllll}
48 & 43 & 52 & 75 & 112 \\
60 & \mathcal{\mathbf{50}} & 54 & 72 & 104\\
72 & 57 & \mathcal{\mathbf{56}} & 69 & 96\\
84 & 64 & 58 & 66 & 88\\
96 & 71 & 60 & 63 & 80\\ 
\end{array} \right)\hspace{-2pt}.
$$
\end{example}

Note that due to the shape of $U_A$ and $U_B$ the cost matrices of the game are transpose of each other. Therefore in the following, we shall only give matrix $U_A$.

We have the following theorem:
\begin{theorem}\label{th:asmall}
All Nash equilibria are symmetrical, i.e.
$$\fABC^* = \fBAC^*. $$
\end{theorem}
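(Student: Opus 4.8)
The plan is to work directly from the two best-response conditions, exploiting strict convexity (established earlier) of $u_A:\fABC \mapsto U_A(\fABC,\fBAC)$ and $u_B:\fBAC \mapsto U_B(\fABC,\fBAC)$. Fix a Nash equilibrium $(\fABC^*, \fBAC^*)$ and suppose, for contradiction, that $\fABC^* \neq \fBAC^*$; without loss of generality (the game is symmetric under swapping the two players, since the cost matrices are transposes) assume $\fABC^* < \fBAC^*$. I would like to derive a contradiction by comparing the two players' incentives to shift their flow by one unit $1/N$.

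First I would write down the discrete optimality conditions at the equilibrium. Since $u_A$ is strictly convex on the grid $\{0, 1/N, \dots, 1\}$, the point $\fABC^*$ is a best response iff $U_A(\fABC^*,\fBAC^*) \le U_A(\fABC^* + 1/N, \fBAC^*)$ (when $\fABC^* < 1$) and $U_A(\fABC^*,\fBAC^*) \le U_A(\fABC^* - 1/N, \fBAC^*)$ (when $\fABC^* > 0$); symmetrically for $u_B$ at $\fBAC^*$. Using the explicit quadratic forms in Eq.~\ref{eq:utA}--\ref{eq:utB}, each such inequality becomes a linear inequality in $\fABC^*$ and $\fBAC^*$: for instance the ``upward'' condition for $A$ reads, after cancelling, roughly $-b + (a-2b)\fBAC^* + 2(a+2b)\fABC^* + (a+2b)/N \ge 0$, i.e. the discrete analogue of $\partial U_A/\partial \fABC \ge 0$ shifted by a $1/N$ term, and the ``downward'' condition gives the reverse inequality shifted the other way. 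I would collect the four relevant inequalities (two for each player, omitting whichever is vacuous at a boundary).

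Next I would subtract the condition for $B$ from the condition for $A$ (choosing the pair of inequalities that point ``toward'' each other across the assumed gap $\fABC^* < \fBAC^*$). The key algebraic fact to extract is that $\partial U_A/\partial \fABC - \partial U_B/\partial \fBAC = (a-2b)(\fBAC^* - \fABC^*) + 2(a+2b)(\fABC^* - \fBAC^*) = -(a+4b-2a)(\fBAC^*-\fABC^*)$; more carefully, $\partial U_A/\partial \fABC - \partial U_B/\partial \fBAC = \big[(a-2b) - 2(a+2b)\big](\fBAC^* - \fABC^*) = -(a+6b)(\fBAC^*-\fABC^*)$, which is strictly negative under the hypothesis $\fABC^* < \fBAC^*$ (recall $a,b>0$). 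The discrete conditions then force this difference to lie in a window of width $O(1/N)$ around zero, and one must check that the sign is incompatible: intuitively, if $\fBAC^*$ is strictly larger, then player $B$ is "overusing" route $BAC$ relative to what symmetry would dictate, so $B$ has a strict incentive to decrease $\fBAC^*$ while $A$ has at best a weak incentive to increase $\fABC^*$, and the two cannot both be in equilibrium. I would also need to handle the boundary cases $\fABC^* = 0$ or $\fBAC^* = 1$ separately, where only one of the two inequalities per player is available; these are the cases where the argument is tightest.

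The main obstacle I anticipate is precisely this bookkeeping of the discrete (grid) optimality conditions together with the boundary cases: unlike the continuous relaxation, where one simply sets gradients to zero and the symmetry of the resulting $2\times 2$ linear system forces $\fABC^* = \fBAC^*$ immediately, here the $1/N$ slack terms mean the two players' first-order conditions are inequalities pointing in opposite directions, and one has to verify the slack is never enough to accommodate an asymmetric pair. I would structure the final argument as: (a) if both coordinates are interior, the two two-sided conditions pin $\partial U_A/\partial \fABC$ and $\partial U_B/\partial \fBAC$ each to an interval of length $2(a+2b)/N$ straddling $0$, and combined with the identity above this forces $|\fBAC^* - \fABC^*|(a+6b)$ to be both $O(1/N)$ \emph{and}, by integrality of $N\fABC^*, N\fBAC^*$, either zero or at least $1/N$ in absolute value — then a direct comparison of the endpoints rules out the nonzero case; (b) the boundary cases are dispatched by the one-sided conditions, using that at a boundary the unavailable move would only make the inequality easier to violate. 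Assembling (a) and (b) gives $\fABC^* = \fBAC^*$.
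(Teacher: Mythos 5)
Your argument is essentially sound but follows a genuinely different route from the paper's. The paper's proof is a global swap-deviation argument: since both players share the action grid $\{0,1/N,\dots,1\}$, player $A$ may deviate from $\fABC^*$ to $\fBAC^*$ and player $B$ from $\fBAC^*$ to $\fABC^*$; writing out $U_A(\fABC^*,\fBAC^*)\le U_A(\fBAC^*,\fBAC^*)$ and $U_B(\fABC^*,\fBAC^*)\le U_B(\fABC^*,\fABC^*)$ and adding the two inequalities yields $0\le(a-2b)(\fABC^*-\fBAC^*)^2$, which is impossible when $a<2b$ and $\fABC^*\ne\fBAC^*$ --- two lines, no first-order conditions, no boundary cases. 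Your route via one-step deviations also closes, and the threshold it produces is exactly the right one: assuming $\fABC^*<\fBAC^*$, the moves ``$A$ up by $1/N$'' and ``$B$ down by $1/N$'' are both always feasible, giving $\frac{\partial U_A}{\partial\fABC}\ge-(a+2b)/N$ and $\frac{\partial U_B}{\partial\fBAC}\le(a+2b)/N$; combined with your identity $\frac{\partial U_A}{\partial\fABC}-\frac{\partial U_B}{\partial\fBAC}=-(a+6b)(\fBAC^*-\fABC^*)$ this forces $(a+6b)(\fBAC^*-\fABC^*)\le 2(a+2b)/N$, and since $\fBAC^*-\fABC^*\ge1/N$ one gets $a+6b\le2(a+2b)$, i.e.\ $a\ge2b$, a contradiction. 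Note that this choice of deviations makes your separate boundary analysis unnecessary --- the two one-sided conditions you actually need are automatically available whenever $\fABC^*<\fBAC^*$ --- which is fortunate, because the heuristic you offer for the boundary cases (``the unavailable move would only make the inequality easier to violate'') is not a proof and is the one genuinely unfinished piece of your write-up as it stands. The trade-off between the two approaches: the paper's swap argument is shorter and entirely case-free, while your local argument is heavier in bookkeeping but makes visible exactly where the hypothesis $a<2b$ enters and why the same reasoning must break down (and asymmetric equilibria can appear) when $a>2b$.
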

The proof is given in Appendix~\ref{ap:asmall}.

\subsection{Case \texorpdfstring{$a=2b$ (with $a>0$)}{a=2b}}

When $a=2b$, we shall show that some non-symmetrical equilibria exists.

\begin{theorem} \label{th:a=2b}
If $a=2b$, there are exactly either $1$ or $4$ Nash equilibria. For any $N$, let $\overline{N} = \lfloor \frac{N}{8} \rfloor$.
\begin{itemize}
\item If $N \text{ mod } 8 =4$, there are $4$ equilibria $(\nABC^*,\nBAC^*)$, which are $(\overline{N}, \overline{N})$, $(\overline{N}+1, \overline{N})$, $(\overline{N}, \overline{N}+1)$ and $(\overline{N}+1, \overline{N}+1)$.
\item Otherwise, there is a unique equilibrium, which is $(\overline{N}, \overline{N})$ if $N\text{ mod } 8 < 4$ or 
$(\overline{N}+1, \overline{N}+1)$ if $N \text{ mod } 8 > 4$.
\end{itemize}
\end{theorem}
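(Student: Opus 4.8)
The plan is to exploit the fact that when $a=2b$ the cross term $(a-2b)\fABC\fBAC$ in the utility~\eqref{eq:utA} vanishes, so that the game decouples into two identical one-dimensional optimization problems.

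First I would substitute $a=2b$ into~\eqref{eq:utA}, obtaining $U_A(\fABC,\fBAC)=b\,(1-\fABC+\fBAC+4\,\fABC^2)$, and note that, viewed as a function of player $A$'s own variable $\fABC$, this does not depend on $\fBAC$ at all (the $\fBAC$ term is an additive constant from $A$'s viewpoint, and $b=a/2>0$). Hence $A$'s best-response set is a \emph{fixed} subset $S\subseteq\{0,1,\dots,N\}$, namely the set of integers $n$ minimizing $n(4n-N)$, and by the transpose symmetry of $U_A$ and $U_B$ (i.e.\ $U_B(\fABC,\fBAC)=U_A(\fBAC,\fABC)$) player $B$'s best-response set is the same set $S$, now for the variable $\nBAC$. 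Consequently the pure Nash equilibria are exactly the pairs in $S\times S$, so their number is $|S|^2$; since $u_A$ is strictly convex (as observed before the theorem) we have $|S|\in\{1,2\}$, which already yields the dichotomy between $1$ and $4$ equilibria.

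Next I would identify $S$ explicitly. The continuous relaxation $t\mapsto t(4t-N)$ is a convex parabola with vertex $t=N/8$, so $S$ consists of the integer(s) of $\{0,\dots,N\}$ closest to $N/8$. Writing $N=8\,\overline{N}+r$ with $r=N\bmod 8\in\{0,\dots,7\}$ and $\overline{N}=\lfloor N/8\rfloor$: if $r<4$ the unique closest integer is $\overline{N}$; if $r>4$ it is $\overline{N}+1$; and if $r=4$ the integers $\overline{N}$ and $\overline{N}+1$ are equidistant from $N/8$, so $S=\{\overline{N},\overline{N}+1\}$. A quick check shows that $\overline{N}$, and $\overline{N}+1$ whenever it is invoked, belong to $\{0,\dots,N\}$ for every $N\geq1$, so the boundary of the strategy set plays no role. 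Forming $S\times S$ in each of the three cases then reproduces exactly the alternatives listed in the statement.

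There is no serious obstacle here; the points that need a little care are (i) justifying that the discrete minimum, over the arithmetic grid $\{0,1/N,\dots,1\}$, of a strictly convex function is attained at the grid point(s) nearest its continuous minimizer, with a genuine tie precisely when that minimizer is a half-integer multiple of $1/N$ (here exactly when $N\equiv4\bmod8$); and (ii) the tie-breaking at $r=4$ together with verifying that the constraint $0\le n\le N$ is never active. Both are routine.
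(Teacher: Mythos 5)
Your proposal is correct and follows essentially the same route as the paper: both exploit that $a=2b$ kills the cross term, so each player's best response is independent of the opponent and reduces to rounding the continuous minimizer $N/8$ to the nearest grid point, with a tie exactly when $N\equiv 4 \pmod 8$. Your packaging of the equilibria as $S\times S$ for a fixed best-response set $S$ is a slight streamlining, since it absorbs the boundary cases that the paper handles separately in its appendix, but the underlying argument is the same.
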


\begin{proof}
The Nash equilibria are the optimal points for both $u_A$ and $u_B$. They are therefore either interior or boundary points (i.e. either $\fABC$ or $\fBAC$ are in ${{0},{1}}$). We detail the interior point cases in Appendix~\ref{ap:a=2b}.
The rest of the proof derives directly from the definition of $\displaystyle \frac{\partial U_A}{\partial \fABC}$ and $\displaystyle \frac{\partial U_B}{\partial \fBAC}$. Indeed:
$$ \frac{\partial U_A}{\partial \fABC} = (a-2b) \fBAC + 2 (2b+a) \fABC -b = 8b \fABC -b $$
$$ \frac{\partial U_B}{\partial \fBAC} = (a-2b) \fABC + 2 (a+2b) \fBAC -b = 8b \fBAC -b.$$
Both are minimum for $1/8$. Therefore, it is attained if $N$ is a multiple of $8$. Otherwise, the best response of each player is either $\frac{1}{N}\lfloor \frac{N}{8} \rfloor$ if $N \mod 8 \leq 3$ or $\frac{1}{N} \lceil \frac{N}{8} \rceil$ if $N \mod 8 \geq 5$. If $N \mod 8 = 4$, then each player has $2$ best responses which are $\frac{1}{N} \frac{N-4}{8}$ and $\frac{1}{N} \frac{N+4}{8}$. Then, one can check that the boundary points follow the law of Theorem~\ref{thB:a=2b} when $\overline{N} = \lfloor \frac{N}{8} \rfloor = 0$.
\end{proof}

\subsection{Case \texorpdfstring{$a-2b>0$}{a-2b>0}}


\begin{theorem} \label{th:big}
If $a-2b>0$, there are exactly either $1$, $2$ or $3$ Nash equilibria.\\ 
Let $\displaystyle \alpha = \frac{a+2b}{3a+2b}$, $\displaystyle \beta=\frac{2a}{3a+2b}$ and $\displaystyle \gamma = \frac{b}{3a+2b}$.\\[0.9em]
Define further $\widetilde{N} = \lfloor N\gamma \rfloor$ and $z(N)=N\gamma-\widetilde{N}$.
The equilibria are of the form 
\begin{itemize}
\item Either $(\widetilde{N}, \widetilde{N})$, $(\widetilde{N}+1, \widetilde{N})$, $(\widetilde{N}, \widetilde{N}+1)$\\ if $N$ is such that $z(N) = \alpha$ (mode 3-A in Figure~\ref{fig:modes})
\item Or $(\widetilde{N}+1, \widetilde{N}+1)$, $(\widetilde{N}+1, \widetilde{N})$, $(\widetilde{N}, \widetilde{N}+1)$ if $N$ is such that $z(N) = \beta$ (mode 3-B)
\item Or $(\widetilde{N}, \widetilde{N}+1)$, $(\widetilde{N}+1, \widetilde{N})$\\ if $N$ is such that $\alpha < z(N) < \beta$ (mode 2)
\item Or $(\widetilde{N}, \widetilde{N})$\\ if $N$ is such that $\beta < z(N) < \alpha+1$ (mode 1).
\end{itemize}\vspace{-1em}
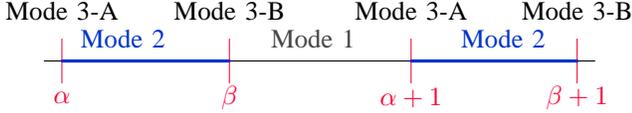
\begin{figure}[htb]
\definecolor{uuuuuu}{rgb}{0.27,0.27,0.27} 
\definecolor{qqttcc}{rgb}{0.0,0.2,0.8}
\definecolor{ffqqtt}{rgb}{1.0,0.0,0.2}
\begin{tikzpicture}[line cap=round,line join=round,>=triangle 45,x=0.9433962264150945cm,y=0.9771986970684041cm]
\newcommand\XA{3.24} \newcommand\XB{5.6} \newcommand\XC{8.16} \newcommand\XD{10.50}
\newcommand\haut{4.3}\newcommand\bas{3.7}\newcommand\centre{4}
\clip(\XA-1,3) rectangle (\XD+1,5.5);
\draw [domain=\XA-0.25:\XD+0.25] plot(\x,{\centre}); 
\draw [color=ffqqtt] (\XA,\haut)-- (\XA,\bas); \draw [color=ffqqtt] (\XB,\haut)-- (\XB,\bas); 
\draw [color=ffqqtt] (\XC,\haut)-- (\XC,\bas); \draw [color=ffqqtt] (\XD,\haut)-- (\XD,\bas);
\draw [line width=1.20pt,color=qqttcc] (\XA,\centre)-- (\XB,\centre); 
\draw [line width=1.20pt,color=qqttcc] (\XC,\centre)-- (\XD,\centre);
\draw (\XA,4.94) node[anchor=north] {Mode 3-A}; 
\draw (\XB,4.94) node[anchor=north] {Mode 3-B};
\draw (\XC,4.94) node[anchor=north] {Mode 3-A};
\draw (\XD,4.94) node[anchor=north] {Mode 3-B};
\draw[color=ffqqtt] (\XA,3.5) node {$\alpha$};
\draw[color=ffqqtt] (\XB,3.5) node {$\beta$};
\draw[color=ffqqtt] (\XC,3.5) node {$\alpha+1$};
\draw[color=ffqqtt] (\XD,3.5) node {$\beta+1$};
\draw[color=qqttcc] (4.10,\haut) node {Mode 2};
\draw[color=qqttcc] (9.46,\haut) node {Mode 2};
\draw[color=uuuuuu] (6.78,\haut) node {Mode 1};
\end{tikzpicture}
\caption{Different modes according to different values of $N$.}
\label{fig:modes}
\end{figure}
\end{theorem}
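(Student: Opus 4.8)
The plan is to reduce the equilibrium conditions to the sign of a single affine function, then to localize the equilibria, and finally to enumerate. Since $u_A:\fABC\mapsto U_A(\fABC,\fBAC)$ and $u_B:\fBAC\mapsto U_B(\fABC,\fBAC)$ are strictly convex (shown just before the statement), a profile $(\nABC^*,\nBAC^*)=(n_1,n_2)$ is a Nash equilibrium iff each coordinate minimizes the corresponding one-variable function over $\{0,\dots,N\}$; at an interior point this amounts to the first forward difference being $\le 0$ on the left and $\ge 0$ on the right. A direct computation from~\eqref{eq:utA} gives that, with $k=\nABC$ and $m=\nBAC$, the difference $U_A((k{+}1)/N,m/N)-U_A(k/N,m/N)$ is a positive multiple of
$$D(k,m):=(a+2b)(2k+1)+(a-2b)m-bN,$$
and by the $A\leftrightarrow B$ symmetry the forward difference of $U_B$ in $\nBAC$ is a positive multiple of $D$ with its two arguments swapped. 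Hence an interior profile $(n_1,n_2)$ is an equilibrium iff
$$D(n_1-1,n_2)\le 0\le D(n_1,n_2)\qquad\text{and}\qquad D(n_2-1,n_1)\le 0\le D(n_2,n_1).$$

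Next I would localize. Adding the two ``$\ge 0$'' inequalities, and separately the two ``$\le 0$'' inequalities, the outcome depends only on $n_1+n_2$ and yields $2N\gamma-2\alpha\le n_1+n_2\le 2N\gamma+2\alpha$; since $a>2b$ forces $\tfrac14<\alpha<\tfrac12<\beta<\tfrac34$ with $\alpha+\beta=1$, this interval has length $<2$, and using $\gamma<\tfrac13$ it can only contain $2\widetilde N,2\widetilde N+1,2\widetilde N+2$. Combining one ``$\le 0$'' inequality of one player with the complementary ``$\ge 0$'' inequality of the other gives $|n_1-n_2|\le \tfrac{2(a+2b)}{a+6b}<2$, hence $|n_1-n_2|\le 1$. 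Therefore every equilibrium has the form $(\widetilde N+p,\widetilde N+q)$ with $(p,q)\in\{(0,0),(1,0),(0,1),(1,1)\}$, which already gives the bound ``$1$, $2$ or $3$'' once the count in each regime is settled.

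It remains to test the four candidate pairs. Writing $bN=(3a+2b)(\widetilde N+z(N))$, every occurrence of $\widetilde N$ cancels in $D$, and each of the four inequalities becomes a threshold on $z(N)$ expressed through $\alpha,\beta,\gamma$. Using the key identity $\alpha+\beta=1$ (together with $\tfrac14<\alpha<\tfrac12<\beta$), these collapse to: $(\widetilde N,\widetilde N)$ is an equilibrium iff $z(N)\le\alpha$; each of $(\widetilde N{+}1,\widetilde N)$ and $(\widetilde N,\widetilde N{+}1)$ is an equilibrium iff $\alpha\le z(N)\le\beta$; and $(\widetilde N{+}1,\widetilde N{+}1)$ is an equilibrium iff $z(N)\ge\beta$. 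Partitioning $[0,1)$ by the points $\alpha$ and $\beta$ then reads off the regimes of Figure~\ref{fig:modes}: exactly three equilibria at $z(N)=\alpha$ (mode 3-A) and at $z(N)=\beta$ (mode 3-B), exactly two (asymmetric) equilibria for $\alpha<z(N)<\beta$ (mode 2), and a single symmetric equilibrium otherwise (mode 1).

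The main obstacle is the bookkeeping at the two kinds of boundary. First, the forward-difference test for a best response is two-sided only at interior points, so any candidate with $\nABC$ or $\nBAC$ in $\{0,N\}$ needs a one-sided check; since $\gamma<\tfrac13$ this can occur only when $\widetilde N=0$, i.e.\ for the finitely many $N$ with $N<3a/b+2$, and one verifies directly that the same classification persists there (consistently with the earlier fact that $(0,0)$ is an equilibrium iff $N\le a/b+2$). Second, the equalities $z(N)=\alpha$ and $z(N)=\beta$ are precisely the values at which a player has two best responses, and it is these ties that produce the extra equilibria of modes 3-A and 3-B; the non-strict inequalities must therefore be carried through the whole argument rather than discarded. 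Everything else is routine algebra in $\alpha,\beta,\gamma$.
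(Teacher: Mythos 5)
Your proposal is correct, and it takes a genuinely different route from the paper's. The paper localizes the candidates by noting that each player's discrete best response lies within $\frac{1}{2N}$ of the continuous minimizer $x_A=\frac{b-\theta\fBAC}{\lambda}$, composes the two reaction maps to trap both coordinates in a window of width $\frac{2(2b+a)}{N(6b+a)}\in(\frac1N,\frac2N)$ around $\frac{b}{3a+2b}$, then needs a separate lemma to exclude the coexistence of $(A,A)$ and $(A+1,A+1)$, and finally identifies modes 3-A and 3-B through tie conditions such as $U_A(A,A)=U_A(A+1,A)$, leaving the sufficiency direction of each mode largely implicit. You instead work throughout with the integer forward difference $D(k,m)=(a+2b)(2k+1)+(a-2b)m-bN$, localize by adding and subtracting the four equilibrium inequalities (which kills the exclusion lemma for free, since $|n_1-n_2|\le 1$ and the parity of $n_1+n_2$ already prevent $(\widetilde N,\widetilde N)$ and $(\widetilde N+1,\widetilde N+1)$ from coexisting once the thresholds are computed), and then obtain \emph{iff} characterizations of each of the four candidates purely in terms of $z(N)$ versus $\alpha$ and $\beta$; I verified these thresholds ($D(\widetilde N,\widetilde N)=(3a+2b)(\alpha-z)$, $D(\widetilde N,\widetilde N+1)=(3a+2b)(\beta-z)$, the remaining inequalities being vacuous) and they are exact. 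Your version buys a complete classification in one pass, makes the role of $\alpha+\beta=1$ transparent, and is actually more precise than the theorem statement in mode 1 (distinguishing $(\widetilde N,\widetilde N)$ for $z<\alpha$ from $(\widetilde N+1,\widetilde N+1)$ for $z>\beta$, which the paper absorbs by re-indexing across the integer boundary). Two cosmetic points: the sharp lower bound is $\alpha>\frac13$ rather than $\frac14$ (immaterial, since only $\alpha<\frac12$ is used), and the sum-interval of length $4\alpha<2$ contains at most two integers simultaneously, so "can only contain $2\widetilde N,2\widetilde N+1,2\widetilde N+2$" should be read as listing the possible values, not claiming all three fit at once; neither affects the argument. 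Your handling of the boundary ($\widetilde N=0$, and the exclusion of $\fABC=1$ via $\gamma<\frac13$) matches what the paper does in its Theorem~\ref{th:0x}.
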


We illustrate the different modes in the following example.
\begin{example}
Suppose that $a=10$ and $b=3$ (we represent only the part of the matrices corresponding to $1/N \leq \fABC, \fBAC \leq 4/N$). 

If $N=24$, there are $3$ Nash equilibria:
$$\begin{array}{llll}
1152&	1200&	1248&	1296\\
1118&	\mathbf{1172}&	\mathbf{1226}&	1280\\
1112&	\mathbf{1172}&	1232&	1292\\
1134&	1200&	1266&	1332\\
\end{array}$$

If $N=26$, there are $2$ Nash equilibria:
$$\begin{array}{llll}
1352&	1404&	1456&	1508\\
1314&	1372&	\mathbf{1430}&	1488\\
1304&	\mathbf{1368}&	1432&	1496\\
1322&	1392&	1462&	1532\\
\end{array}$$

If $N=27$, there are $3$ Nash equilibria:
$$\begin{array}{llll}
1458&	1512&	1566&	1620\\
1418&	1478&	\mathbf{1538}&	1598\\
1406&	\mathbf{1472}&	\mathbf{1538}&	1604\\
1422&	1494&	1566&	1638\\
\end{array}$$

If $N=28$, there is a single Nash equilibrium:
$$\begin{array}{llll}
1568&	1624&	1680&	1736\\
1526&	1588&	1650&	1712\\
1512&	1580&	\mathbf{1648}&	1716\\
1526&	1600&	1674&	1748\\
\end{array}$$
\end{example}

\subsection{Limit Case: Perfectly Splitable Sessions}

We focus here in the limit case where $N \rightarrow +\infty$. 
\begin{theorem}
There exists a unique Nash equilibrium and it is such that $$ \fBAC^* = \fABC^* = \frac{b}{3a+2b}.$$
\end{theorem}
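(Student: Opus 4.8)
The plan is to treat the limiting game as a two‑player game in which $A$ chooses $\fABC\in[0,1]$ (with $\fAC=1-\fABC$) and $B$ chooses $\fBAC\in[0,1]$, so that a Nash equilibrium is a pair $(\fABC^*,\fBAC^*)$ for which $\fABC^*$ minimizes $u_A:\fABC\mapsto U_A(\fABC,\fBAC^*)$ over $[0,1]$ and $\fBAC^*$ minimizes $u_B:\fBAC\mapsto U_B(\fABC^*,\fBAC)$ over $[0,1]$. Since, as already observed, $\partial^2U_A/\partial\fABC^2=\partial^2U_B/\partial\fBAC^2=2(a+2b)>0$, both $u_A$ and $u_B$ are strictly convex; hence each best response is unique, and it equals the unconstrained stationary point of the one‑variable function when that point lies in $[0,1]$, and the appropriate endpoint otherwise. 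It therefore suffices to locate all simultaneous interior stationary points and then rule out equilibria having a coordinate at $0$ or $1$.

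First I would compute the interior candidate. Setting $\partial U_A/\partial\fABC=\partial U_B/\partial\fBAC=0$ gives the linear system with coefficient matrix $\left(\begin{smallmatrix}2(a+2b)&a-2b\\ a-2b&2(a+2b)\end{smallmatrix}\right)$ and right‑hand side $(b,b)$, whose determinant is $4(a+2b)^2-(a-2b)^2=(3a+2b)(a+6b)>0$. Thus the system has exactly one solution, which by its symmetry (swapping the two equations maps the system to itself) satisfies $\fABC^*=\fBAC^*=f$ with $(3a+2b)f=b$, i.e. $f=b/(3a+2b)$; since $a,b>0$ this lies in $(0,1)$, so it is a genuine interior equilibrium and the only possible interior equilibrium.

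It then remains to exclude boundary equilibria, and I expect this to be the only real work. Because the game is symmetric, $U_B(\fABC,\fBAC)=U_A(\fBAC,\fABC)$, so it is enough to show $\fABC^*\in\{0,1\}$ is impossible. If $\fABC^*=0$, then $B$'s unique best response minimizes $\fBAC\mapsto b-b\fBAC+(a+2b)\fBAC^2$, whose stationary point $b/(2(a+2b))$ is always interior, so $\fBAC^*=b/(2(a+2b))$; but optimality of $\fABC^*=0$ for $A$ forces $\partial U_A/\partial\fABC|_{\fABC=0}=-b+(a-2b)\fBAC^*\ge0$, which on substitution reduces to $a\le-6b$, impossible. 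If $\fABC^*=1$, then $B$'s best response against $\fABC^*=1$ is the projection onto $[0,1]$ of $(3b-a)/(2(a+2b))$; substituting into the optimality condition $\partial U_A/\partial\fABC|_{\fABC=1}=-b+(a-2b)\fBAC^*+2(a+2b)\le0$ yields, in the interior sub‑case, $3a^2+19ab+6b^2\le0$, and in the remaining sub‑case ($\fBAC^*=0$, which occurs when $a\ge3b$) the cruder contradiction $2a+3b\le0$; neither can hold for $a,b>0$. Hence no equilibrium has a coordinate on the boundary, and the interior symmetric point is the unique Nash equilibrium.

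A shorter route I might take instead avoids the boundary bookkeeping entirely: the pseudo‑gradient $(\partial U_A/\partial\fABC,\ \partial U_B/\partial\fBAC)$ is affine with constant Jacobian equal to the symmetric matrix above, which is positive definite (one checks $2(a+2b)>|a-2b|$ for all $a,b>0$). Hence the pseudo‑gradient is strictly monotone on $[0,1]^2$, and the standard uniqueness argument for such convex games — in the spirit of \cite{RR} — immediately gives that the equilibrium is unique; it must then coincide with the symmetric interior point exhibited above, namely $\fABC^*=\fBAC^*=b/(3a+2b)$.
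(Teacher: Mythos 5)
Your proof is correct and follows essentially the same route as the paper: strict convexity of each player's cost, the symmetric interior solution $b/(3a+2b)$ of the first-order conditions, and exclusion of boundary equilibria (your $\fABC^*=0$ computation is identical to the paper's, ending in the same $-a-6b\ge 0$ contradiction). The only cosmetic difference is at the upper boundary, where the paper simply observes that $\partial U_A/\partial\fABC$ evaluated at $\fABC=1$ is positive for every $\fBAC\in[0,1]$, which avoids your two sub-cases.
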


\begin{proof}
Note that $\displaystyle \frac{\partial U_A}{\partial \fABC} (1) >0$ and $\displaystyle \frac{\partial U_B}{\partial \fBAC} (1) >0$.
If $\fABC = 0$ then $\fBAC = \frac{b}{2a+4b}$ which implies that $ -b+ \frac{b (a-2b)}{2a+4b} \geq 0$, which further implies that $-a-6b >0$ which is impossible. Hence $\fABC >0$. Similarly $\fBAC >0$ which concludes the proof.
\end{proof}

Recall that the optimum sum (social optimum) is given by $(0,0)$ and that the worse case is given by $(1,1)$. Hence, regardless of the values of $a$ and $b$, at the limit case, we observe that there is a unique Nash equilibrium, that is symmetrical, and is neither optimal (as opposed to \F3), nor the worst case scenario. 
The price of anarchy is then:
$$ PoA = PoS = \frac {2b + 2 \fABC^{*^2}a}{2b} =  1 + \frac{ab}{(3a+2b)^2}.$$

\section{Conclusions}
We revisited in this paper a load balancing problem within a non-cooperative
routing game framework. This model had already received
much attention in the past within some classical frameworks (the Wardrop
equilibrium analysis and the atomic splitable routing game framework).
We studied this game under other frameworks - the non splitable atomic game
(known as congestion game) as 
well as a the semi-splitable framework.
We have identified many surprising features of
equilibria in both frameworks. We showed that unlike the previously
studied frameworks, 
there is no uniqueness of equilibrium, and non-symmetric equilibria may
appear (depending on the parameters). For each of the frameworks we identified the
different equilibria and provided some of their properties. We also provided
an efficiency analysis in terms of price of anarchy and price of stability.
In the future we plan to investigate more general cost structures and topologies.

\bibliographystyle{IEEEtran}
\bibliography{routing}

\appendix

\subsection{Proof of Theorem~\ref{th:asmall}.}
\label{ap:asmall}

Suppose that $(\fABC^*,\fBAC^*)$ is a Nash equilibrium with $\fABC^* \neq \fBAC^*$. Then, by definition:
$$ \begin{array}{l}
U_A (\fABC^*, \fBAC^*) \leq U_A (\fBAC^*, \fBAC^*) \text { and } \\
U_B (\fABC^*, \fBAC^*) \leq U_B (\fABC^*, \fABC^*),
\end{array}
$$
which gives, after some manipulations,
$$ \left\{ \begin{array}{l@{}l@{}} 
(a-2b)&\fABC^*\fBAC^* \leq \\
&2a \fBAC^{*2} + b\fABC^* - b \fBAC - (a+2b) \fABC^{*2} \\
(a-2b)&\fABC^*\fBAC^*  \leq \\
&2 a \fABC^{*2} + b \fBAC^* - b\fABC^* - (a+2b)\fBAC^{*2}.
\end{array}\right.$$

Therefore $ \displaystyle 2 (a-2b) \fABC^*\fBAC^* \leq (a-2b) ( \fABC^{*2}  + \fBAC^{*2})$
and hence $\displaystyle 0 \leq (a-2b) ( \fABC^* - \fBAC^*)^2 $ which is impossible.

\subsection{Boundary equilibria when \texorpdfstring{$a=2b$}{a=2b}.}
\label{ap:a=2b}

\begin{theorem} \label{thB:a=2b} 
If $a=2b$, there exists a single Nash equilibrium of the form $(0, \fBAC^*)$ and $(\fBAC^*, 0)$ with $\fBAC^*$ non null. It is obtained for $N=4$ and $\fBAC^* = 1/4$.
 The points $(0, 0)$ are Nash equilibria if and only if $N \leq 4$.
 Further, there are no equilibrium of the form $(\fABC, 1)$ or $(1, \fBAC)$.
 \end{theorem}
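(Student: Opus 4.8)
The plan is to use the fact that when $a=2b$ the two players' payoffs decouple. Substituting $a=2b$ into (\ref{eq:utA})--(\ref{eq:utB}), so that $a-2b=0$ and $a+2b=4b$, yields $U_A(\fABC,\fBAC)=\phi(\fABC)+b\fBAC$ and $U_B(\fABC,\fBAC)=\phi(\fBAC)+b\fABC$, where $\phi(x)\eqdef b-bx+4bx^{2}$. Hence the best responses of player $A$ to \emph{any} $\fBAC$ are exactly the minimisers of $\phi$ over the grid $G_N=\{0,1/N,\dots,1\}$, and symmetrically for $B$; consequently $(\fABC^{*},\fBAC^{*})$ is a Nash equilibrium iff both $\fABC^{*}\in\arg\min_{G_N}\phi$ and $\fBAC^{*}\in\arg\min_{G_N}\phi$. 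All three assertions then reduce to: when does $0$ lie in $\arg\min_{G_N}\phi$, when does $1$, and what else can lie in it when $0$ does.

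Next I would record the elementary properties of $\phi$. It is strictly convex with $\phi'(x)=b(8x-1)$, hence strictly decreasing on $[0,1/8]$ and strictly increasing on $[1/8,1]$, with unconstrained minimiser $x_0=1/8$. From $\phi(0)=b$ and $\phi(1/N)=b(1-1/N+4/N^{2})$ we get $\phi(0)\le\phi(1/N)\iff N\le 4$, with equality exactly when $N=4$. Since for $N\le 4$ every nonzero point of $G_N$ is $\ge 1/N\ge 1/4>1/8$, monotonicity gives $\phi(1/N)\le\phi(2/N)\le\cdots$, so $\arg\min_{G_N}\phi=\{0\}$ for $N<4$, $\arg\min_{G_4}\phi=\{0,1/4\}$, and $0\notin\arg\min_{G_N}\phi$ for $N>4$. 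Likewise, because $\phi$ is increasing on $[1/8,1]$ and the largest element of $G_N$ strictly below $1$ is $1-1/N$ (which is $\ge 1/8$ for $N\ge 2$, and equals $0$ with $\phi(0)=b<4b=\phi(1)$ when $N=1$), we have $\phi(1)>\phi(1-1/N)$ for every $N\ge 1$, so $1\notin\arg\min_{G_N}\phi$ for all $N$.

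Finally I would assemble the statement. The point $(0,0)$ is an equilibrium iff $0\in\arg\min_{G_N}\phi$, i.e. iff $N\le 4$ (consistent with the earlier criterion $N\le a/b+2$, which reads $N\le 4$ here). An equilibrium of the form $(0,\fBAC^{*})$ with $\fBAC^{*}\neq 0$ requires $0\in\arg\min_{G_N}\phi$, hence $N\le 4$, together with $\fBAC^{*}\in\arg\min_{G_N}\phi\setminus\{0\}$; by the second paragraph this set is empty for $N\le 3$ and equals $\{1/4\}$ for $N=4$, so the only such equilibrium is $(0,1/4)$, occurring at $N=4$, and by the symmetry $\fABC\leftrightarrow\fBAC$ the only one of the form $(\fBAC^{*},0)$ is $(1/4,0)$. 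An equilibrium of the form $(\fABC,1)$ would require $1\in\arg\min_{G_N}\phi$, which never holds, and by symmetry neither does $(1,\fBAC)$.

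I expect the only delicate points to be bookkeeping: reading ``best response'' as minimisation over the \emph{discrete} grid rather than over the interval $(0,1)$, and carefully handling the boundary value $N=4$, where the tie $\phi(0)=\phi(1/4)$ is precisely what produces the extra boundary equilibrium. Once the decoupling is noticed, the strict convexity and monotonicity of $\phi$ do all of the real work.
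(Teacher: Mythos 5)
Your proof is correct, and it takes a genuinely different route from the paper's. The paper's appendix argument works locally: it writes out the three unilateral-deviation inequalities for a candidate $(0,\gamma)$ (deviation of player $A$ to $1/N$, of player $B$ to $\gamma\pm 1/N$), substitutes $a=2b$, and solves the resulting inequalities for $N$ and $\gamma$; the $(\gamma,1)$ case is handled by a separate deviation computation. You instead observe that at $a=2b$ the cross term $(a-2b)\fABC\fBAC$ vanishes, so $U_A$ and $U_B$ decouple as $\phi(\fABC)+b\fBAC$ and $\phi(\fBAC)+b\fABC$, and hence the equilibrium set is exactly the product $\arg\min_{G_N}\phi\times\arg\min_{G_N}\phi$. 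This is more structural: a single one-dimensional analysis of the strictly convex $\phi$ (minimiser at $1/8$, tie $\phi(0)=\phi(1/4)$ exactly at $N=4$, and $1$ never a minimiser) delivers all three assertions of the theorem at once, and would in fact also reprove the interior-equilibrium classification of the $a=2b$ case in the main text. The paper's local approach is more routine but generalises directly to $a\neq 2b$, where the decoupling fails and one must fall back on deviation inequalities; your product-set argument is specific to $a=2b$ but cleaner there. The only point to state explicitly is the one you already flag: best responses are computed over the discrete grid $\{0,1/N,\dots,1\}$, and the strict inequalities $\phi(0)<\phi(1/N)$ for $N<4$ and $\phi(1/N)<\phi(0)$ for $N>4$ are what make $N=4$ the unique tie.
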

\begin{proof}
We first study the equilibria of the form $(0, \fABC)$.
$(0, \gamma)$ is a Nash equilibrium iff
$$ \left\{\begin{array}{@{\,}l@{}}
\displaystyle U_A(0,\gamma) \leq U_A\left(\frac{1}{N},\gamma\right) \\
\displaystyle U_B(0,\gamma) \leq U_B\left(0,\gamma+\frac{1}{N}\right) \\
\displaystyle U_B(0,\gamma) \leq U_B\left(0,\gamma-\frac{1}{N}\right)  
\end{array}\right.
\Leftrightarrow \left\{\begin{array}{@{\,}l@{}}
\displaystyle b \leq \frac{2b+a}{N} \\[1em]
\displaystyle b \leq (a+2b) (2\gamma+\frac{1}{N}) \\[1em]
\displaystyle b \geq (a+2b) (2\gamma-\frac{1}{N}) 
\end{array}\right.$$
$$ \Leftrightarrow \left\{\begin{array}{l}
\displaystyle 1 \leq \frac{4}{N} \\[1em]
\displaystyle 1 \leq 4 (2\gamma+\frac{1}{N}) \\[1em]
1 \geq 4 (2\gamma-1/N) 
\end{array}\right.
 \Leftrightarrow 
\begin{array}{@{}l}
\left\{\begin{array}{l}
\displaystyle N \leq 4 \\
\displaystyle \frac{N/8-1/2}{N} \leq \gamma
\end{array}\right.\\
\qquad \qquad \qquad \displaystyle \leq \frac{N/8+1/2}{N}
\end{array}$$

If $N \leq 3$ then $N/8+1/2 \leq 7/8 <1$ which cannot be obtained by the player otherwise than in $0$.
For $N=4$, the second inequality becomes $0 \leq \gamma \leq \frac{1}{4}$ which hence leads to the only non null Nash equilibrium.

We next study the potential equilibria of the form $(\fABC, 1)$.
Let $(\gamma, 1)$ be a Nash equilibrium. Then $ U_B(\gamma, 1) \leq U_B (\gamma, 1-1/N)$.
Then\\
$ \begin{array}{@{}l}
b\gamma + a+2b \leq b-b(1-1/N) + b\gamma + (a+2b) (1-1/N)^2 \\
 \Rightarrow a+2b \leq b/N + (a+2b) (1+1/N^2 -2/N) \\
\Rightarrow 0 \leq b + (a+2b) (1/N -2) \\
\Rightarrow 2a+3b \leq (a+2b)/N 
\Rightarrow N \leq 1/4.
\end{array}$
\end{proof}

\subsection{Boundary equilibria when \texorpdfstring{$a-2b>0$}{a-2b>0}.}

\begin{theorem}
\label{th:0x}
$(0, \alpha)$ and $(\alpha, 0)$ are Nash equilibria iff: 
$$ \frac{b}{a-2b} - \frac{1}{N}\frac{a+2b}{a-2b}
\leq \alpha \leq \frac{b}{2(a+2b)} + \frac{1}{2N}.$$ 

Further, there are no Nash equilibrium of the form $(A, 1)$.
\end{theorem}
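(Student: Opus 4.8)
The plan is to rewrite the statement ``$(0,\alpha)$ is a Nash equilibrium'' as a short list of one-step deviation inequalities and then solve them, in the spirit of the proof of Theorem~\ref{thB:a=2b}. Recall that $u_A:\fABC\mapsto U_A(\fABC,\fBAC)$ and $u_B:\fBAC\mapsto U_B(\fABC,\fBAC)$ are strictly convex, so on the grid $\{0,\frac{1}{N},\dots,1\}$ a value is a (discrete) best response --- that is, a value minimizing that player's cost over the grid --- if and only if its cost does not exceed that of its immediate grid neighbours. For the profile $(0,\alpha)$ this reduces to three inequalities: (i) $U_A(0,\alpha)\le U_A(\frac{1}{N},\alpha)$ (player $A$ sits at the left endpoint $0$, so no other comparison is needed); (ii) $U_B(0,\alpha)\le U_B(0,\alpha-\frac{1}{N})$; and (iii) $U_B(0,\alpha)\le U_B(0,\alpha+\frac{1}{N})$.

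First I would expand (i): using $U_A(0,\alpha)=b+b\alpha$ and $U_A(\frac{1}{N},\alpha)=b-\frac{b}{N}+b\alpha+(a-2b)\frac{\alpha}{N}+\frac{a+2b}{N^2}$, it collapses to $b\le(a-2b)\alpha+\frac{a+2b}{N}$, and since $a-2b>0$ this is exactly the left-hand bound $\alpha\ge\frac{b}{a-2b}-\frac{1}{N}\frac{a+2b}{a-2b}$. Next I would expand (ii): since $U_B(0,\fBAC)=b-b\fBAC+(a+2b)\fBAC^2$ is a parabola in $\fBAC$, the inequality reduces to $2(a+2b)\alpha\le b+\frac{a+2b}{N}$, that is, the right-hand bound $\alpha\le\frac{b}{2(a+2b)}+\frac{1}{2N}$. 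It then remains to see that (iii), which reads $\alpha\ge\frac{b}{2(a+2b)}-\frac{1}{2N}$, is automatically implied: if $\alpha\ge\frac{1}{N}$, then (ii) forces $\frac{a+2b}{N}\le b$, and hence --- using $a-2b\le 2(a+2b)$ --- the bound from (i) is at least $\frac{b-(a+2b)/N}{2(a+2b)}$, which is precisely the bound of (iii); while if $\alpha=0$ the profile is the trivial $(0,0)$, for which (i) and (iii) are \emph{the same} inequality because $U_A(x,y)=U_B(y,x)$, so nothing extra is imposed. Finally, $(\alpha,0)$ follows verbatim from this transpose symmetry, which merely swaps the two players.

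For the last assertion, suppose $(\gamma,1)$ with $\gamma\in\{0,\frac{1}{N},\dots,1\}$ were an equilibrium, and use only that player $B$ gains nothing by moving from $\fBAC=1$ to $\fBAC=1-\frac{1}{N}$, that is, $U_B(\gamma,1)\le U_B(\gamma,1-\frac{1}{N})$. Expanding and cancelling the common terms $(a-b)\gamma+(a+2b)$ leaves $(a-2b)\gamma\le -2a-3b+\frac{a+2b}{N}$; since the left-hand side is nonnegative ($a-2b>0$ and $\gamma\ge0$), this forces $2a+3b\le\frac{a+2b}{N}$, that is, $N\le\frac{a+2b}{2a+3b}<1$, which is impossible. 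Hence there is no equilibrium of the form $(\fABC,1)$, and by the transpose symmetry none of the form $(1,\fBAC)$ either.

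The algebra throughout is routine; the only genuinely delicate point is the bookkeeping in steps (ii)--(iii): one must verify that the single two-sided inequality in the statement really encodes \emph{both} one-step deviations of player $B$. This is why the implication ``right-hand bound together with $\alpha\ge\frac{1}{N}$ $\Rightarrow$ (iii)'', and the separate disposal of the $\alpha=0$ case through the $(0,0)$ analysis (equivalently, through the symmetry $U_A(x,y)=U_B(y,x)$), must be spelled out rather than waved away.
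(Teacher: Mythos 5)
Your proof is correct and follows essentially the same route as the paper's: list the one-step deviation inequalities (justified by the strict convexity of $u_A$ and $u_B$), solve each for $\alpha$, and observe that the lower bound coming from player $B$'s upward deviation is dominated, with the $(A,1)$ case killed by the single downward deviation of player $B$. If anything, your handling of the redundant inequality is slightly more careful than the paper's (which asserts $\frac{bN-2b-a}{N(a-2b)}\geq\frac{bN-a-2b}{2N(a+2b)}$ outright, a comparison that technically needs $bN\geq a+2b$ or the observation that otherwise the weaker bound is negative, and which does not separately discuss $\alpha=0$).
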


\begin{proof}
We first focus on the Nash equilibria of the form $(0,A)$.
Since $U_A(.,\fBAC)$ and $U_B(\fABC,.)$ are convex, $(0, \gamma)$ is a Nash equilibrium iff
$$ 
\left\{\begin{array}{l}
\displaystyle U_A(0,\gamma) \leq U_A\left(\frac{1}{N},\gamma\right) \\
\displaystyle U_B(0,\gamma) \leq U_B\left(0,\gamma+\frac{1}{N}\right) \\
\displaystyle U_B(0,\gamma) \leq U_B\left(0,\gamma-\frac{1}{N}\right)  
\end{array}\right. $$

$$ \Leftrightarrow \hspace{-3pt} \left\{\begin{array}{@{\hspace{1pt}}l}
\displaystyle b \leq (a-2b)\gamma+\frac{2b+a}{N} \\[1em]
\displaystyle b \leq (a+2b) (2\gamma+\frac{1}{N}) \\[1em]
b \geq (a+2b) (2\gamma-\frac{1}{N}) 
\end{array}\right.
\hspace{-1em}\Leftrightarrow \left\{\begin{array}{@{\hspace{1pt}}l}
\displaystyle \gamma \geq \frac{bN-2b-a}{N(a-2b)} \\[1em]
\displaystyle \gamma \geq \frac{bN-a-2b}{2N(a+2b)} \\[1em]
\displaystyle \gamma \leq \frac{bN+a+2b}{2N(a+2b)}
\end{array}\right.
$$
But $ \frac{bN-2b-a}{N(a-2b)} \geq \frac{bN-a-2b}{2N(a+2b)} $ which concludes the proof.
 and hence $\frac{bN-a-2b}{2N(a+2b)} \leq \gamma \leq \frac{bN+a+2b}{2N(a+2b)} $

We now study the potential equilibria of the form $(A,1)$.
Let $(A, 1)$ be a Nash equilibrium. Then
$ \displaystyle U_B(A, 1) \leq U_B (A, 1-1/N)$.
Then
$$ \begin{array}{l}
-b + (a-2b) A + (a+2b) \leq -b(1-1/N) \\
\qquad  + (a-2b) A (1-1/N) + (a+2b) (1-1/N)^2 
\end{array}$$
$$ \Rightarrow 0 \leq b - (a-2b) A + (a+2b) (-2 + 1/N) $$
$$ \Rightarrow (a-2b) A \leq - 2a - 3b + (a+2b)/N \Rightarrow $$
$$ \Rightarrow 2a + 3b \leq (a-2b) A + 2a + 3b  \leq (a+2b)/N $$

But $ 2a+3b \leq (a+2b)/N \Rightarrow N \leq \frac{a+2b}{2a+3b} < 1$.
\end{proof}

\subsection{Proof of Theorem~\ref{th:big}.}

We first start by showing that there are at most $4$ interior Nash equilibria and that they are of the form: $(A,A)$,$(A+1,A)$,$(A,A+1)$,$(A+1,A+1)$.

\begin{proof}
Let $\fABC, \fBAC$ be a Nash equilibrium in the interior (i.e. $0 < \fABC < 1$ and $0 < \fBAC < 1$). Then $\fABC$ and $\fBAC$ are the (discrete) minimizers of $x \mapsto U_A(x, \fBAC)$ and $x \mapsto U_B(\fABC, x)$ respectively. Further:
$$ \left\{
\begin{array}{l}
\displaystyle \frac{\partial U_A}{\partial \fABC} = -b + (a-2b) \fBAC + 2 (2b+a) \fABC \\[1.2em]
\displaystyle \frac{\partial U_B}{\partial \fBAC} = -b + (a-2b) \fABC + 2 (a+2b) \fBAC
\end{array} \right.
$$

The optimum values are therefore respectively:
$$
x_A = \frac{b- \theta \fBAC}{\lambda} \text{ and }
x_B = \frac{b- \theta \fABC}{\lambda}
$$
with $\lambda = 2(2b+a)$ and $ \theta= a-2b$.
Therefore:
$$\left\{\begin{array}{l}
 x_A - \frac{1}{2N} \leq \fABC \leq x_A + \frac{1}{2N}\\[1em]
 x_B - \frac{1}{2N} \leq \fBAC \leq x_B + \frac{1}{2N}.
\end{array} \right.
$$ 
Hence
$$ \begin{array}{l} \displaystyle
\frac{b}{\lambda}-\frac{\theta}{\lambda} \left( \frac{b}{\lambda} - \frac{\theta}{\lambda} \fABC + \frac{1}{2N}\right) - \frac{1}{2N} \leq \fABC \leq \frac{1}{2N} \\[1em]
\displaystyle \hfill + \frac{b}{\lambda} -\frac{\theta}{\lambda} \left( \frac{b}{\lambda} - \frac{\theta}{\lambda} \fABC - \frac{1}{2N}\right) 
\end{array}
$$

Then
$$ 
\frac{b}{\lambda + \theta} - \frac{\lambda}{2N \left( \lambda - \theta\right)} \leq \fABC 
\leq \frac{\lambda}{2N \left( \lambda - \theta \right) } + \frac{b}{\lambda + \theta}
$$

Then
$ \displaystyle \frac{b}{\lambda + \theta} = \frac{b}{2b +3a}$, 
$ \displaystyle \frac{\lambda}{2N \left( \lambda - \theta \right) } = 
\frac{4b+2a}{2N \left( 6b+a \right) }$ and
$ \displaystyle \frac{\lambda}{2N \left( \lambda - \theta\right)} = 
\frac{ 2(a+2b)}{2N \left( 6b+a\right)}$, which gives

$$ 
\frac{b}{2b +3a} - \frac{a+2b}{N \left( 6b+a\right)} \leq \fABC 
\leq  \frac{2b+a}{N \left( 6b+a \right) } + \frac{b}{2b +3a}.
$$
Similarly, we have $$
\displaystyle \frac{b}{2b+3a} - \frac{(2b+a)}{N(6b+a) } \leq 
\fBAC \leq \frac{b}{2b+3a} + \frac{2b+a}{N (6b+a) }.$$
Note that $ \frac{1}{2} < \frac{2b+a}{6b+a} < 1$. Therefore there are either $1$ or $2$ possible values, which are identical for $\fABC$ and $\fBAC$. There are therefore $4$ possible equilibria.
\end{proof}

Now, the potential equilibria are of the form $(A, A)$, $(A, A+1)$, $(A+1, A)$ and $(A+1, A+1)$.
By symmetry, note that if $(A, A+1)$ is a Nash equilibrium, then $(A+1, A)$ also is. The following lemma reduces the number of combinations of equilibria:
\begin{lemma}
If $(A, A)$ is a Nash equilibrium then $(A+1, A+1)$ is not a Nash equilibrium.
\end{lemma}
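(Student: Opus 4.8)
The plan is to argue by contradiction using the best-response characterization already set up: if $(\nABC^*,\nBAC^*) = (A,A)$ is a Nash equilibrium, then $A$ is a discrete minimizer of $x \mapsto U_A(x, A/N)$ (equivalently $\fABC = A/N$ minimizes $u_A$ for the fixed value $\fBAC = A/N$), and by symmetry the same holds for player $B$. Likewise, $(A+1,A+1)$ being a Nash equilibrium would force $A+1$ to be a discrete minimizer of $x \mapsto U_A(x, (A+1)/N)$. First I would write out the two necessary inequalities coming from $(A,A)$: namely $U_A(A/N, A/N) \le U_A((A+1)/N, A/N)$ and, if $A \ge 1$, $U_A(A/N, A/N) \le U_A((A-1)/N, A/N)$; and the analogous pair for $(A+1,A+1)$, in particular $U_A((A+1)/N, (A+1)/N) \le U_A(A/N, (A+1)/N)$.

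Next I would convert these into statements about the sign of the discrete derivative. Since $\partial U_A/\partial \fABC = -b + (a-2b)\fBAC + 2(a+2b)\fABC$ is affine and increasing in $\fABC$ (coefficient $2(a+2b) > 0$), the condition that the integer point $A$ beats $A+1$ at shift parameter $\fBAC = A/N$ says essentially that the continuous optimizer $x_A(\fBAC) = \frac{b - (a-2b)\fBAC}{2(a+2b)}$ lies at most $\tfrac{1}{2N}$ above $A/N$; the condition that $A+1$ beats $A$ at shift parameter $\fBAC = (A+1)/N$ says $x_A((A+1)/N)$ lies at least $\tfrac{1}{2N}$ above $(A+1)/N$. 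But $x_A$ is \emph{decreasing} in $\fBAC$ because $a - 2b > 0$, so increasing the shift from $A/N$ to $(A+1)/N$ strictly \emph{lowers} the continuous optimum by $\frac{(a-2b)}{N \cdot 2(a+2b)} > 0$. Combining, we would get
$$
\frac{A+1}{N} + \frac{1}{2N} \;\le\; x_A\!\left(\frac{A+1}{N}\right) \;=\; x_A\!\left(\frac{A}{N}\right) - \frac{a-2b}{2N(a+2b)} \;\le\; \frac{A}{N} + \frac{1}{2N} - \frac{a-2b}{2N(a+2b)},
$$
which rearranges to $1 \le -\frac{a-2b}{2(a+2b)} < 0$, a contradiction. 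This closes the argument.

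The main obstacle I anticipate is bookkeeping rather than conceptual: getting the exact constant in the discrete-minimizer condition right (whether it is $\le \tfrac{1}{2N}$ or a strict inequality, and how ties at $N\gamma \in \{\alpha,\beta\}+\mathbb{Z}$ are handled), and making sure the boundary cases $A = 0$ or $A+1 = N$ do not escape the argument — for $A=0$ only the one-sided inequality $U_A(0,0)\le U_A(1/N,0)$ is available, but that is exactly the inequality used above, so the argument still goes through; similarly the upper-boundary case is excluded because $x_A(\fBAC) < 1$ always (as $b < a+2b$ and $a-2b>0$ keeps the numerator below $2(a+2b)$). I would also double-check that the strict sign $a-2b>0$ is genuinely used — it is, precisely in asserting $x_A$ strictly decreases — so the lemma is specific to this case and would fail at $a=2b$, consistent with Theorem~\ref{th:a=2b} where $(A,A)$ and $(A+1,A+1)$ can coexist.
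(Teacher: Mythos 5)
Your strategy is the same as the paper's: assume both $(A,A)$ and $(A+1,A+1)$ are equilibria, extract the two one-sided best-response inequalities $U_A(A,A)\le U_A(A+1,A)$ and $U_A(A+1,A+1)\le U_A(A,A+1)$, and derive $a-2b\le 0$. The paper does this by direct subtraction of the two polynomial inequalities; you rephrase it via the vertex $x_A(\fBAC)=\frac{b-(a-2b)\fBAC}{2(a+2b)}$ of the convex quadratic, which is equivalent. The idea is right and the proof goes through, but you have a concrete bookkeeping error in the second translation, exactly of the kind you flagged. The condition that $A+1$ beats $A$ when the opponent plays $(A+1)/N$ says that the vertex is at least the midpoint of $A/N$ and $(A+1)/N$, i.e.
$$x_A\!\left(\tfrac{A+1}{N}\right)\;\ge\;\tfrac{A}{N}+\tfrac{1}{2N},$$
not $x_A\!\left(\tfrac{A+1}{N}\right)\ge\tfrac{A+1}{N}+\tfrac{1}{2N}$ as you wrote (the latter is the condition that $A+1$ beats $A+2$, and it does not follow from the Nash hypothesis). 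With the corrected left-hand side the chain becomes
$$\tfrac{A}{N}+\tfrac{1}{2N}\;\le\;x_A\!\left(\tfrac{A+1}{N}\right)\;=\;x_A\!\left(\tfrac{A}{N}\right)-\tfrac{a-2b}{2N(a+2b)}\;\le\;\tfrac{A}{N}+\tfrac{1}{2N}-\tfrac{a-2b}{2N(a+2b)},$$
giving $0\le-\tfrac{a-2b}{2(a+2b)}<0$, the desired contradiction. So the argument survives the fix; your stated conclusion ``$1\le$ a negative number'' was obtained from an inequality you were not entitled to. Your remarks on the boundary cases and on where $a-2b>0$ is used are correct.
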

\begin{proof}
Suppose that $(A, A)$ and $(A+1, A+1)$ are two Nash equilibria. Then
$U_A(A, A) \leq U_A(A+1, A)$ and $U_A(A+1, A+1) \leq U_A(A, A+1)$, which implies
$$ \left\{ \begin{array}{@{\,}l}
-bAN+(a-2b)A^2+(2b+a) A^2 \leq \\
\hfill -b(A+1)N+(a-2b)A(A+1)+(2b+a)(A+1)^2\\
-b(A+1)N+(a-2b)(A+1)^2+(2b+a) (A+1)^2 \leq \\
\hfill -bAN+(a-2b)A(A+1)+(2b+a)A^2\\
\end{array}\right.$$

$$ \Rightarrow \left\{ \begin{array}{l}
bN \leq (a-2b)A+(2b+a)(2A+1)\\
(a-2b)(A+1)+(2b+a) (2A+1) \leq bN \\
\end{array}\right.$$

$$
\Rightarrow
(a-2b)(A+1) \leq bN- (2b+a) (2A+1) \leq (a-2b)A
$$

Hence $ (a-2b)(A+1) \leq (a-2b)A$ and therefore $a-2b \leq 0$ which is impossible.
\end{proof}

Therefore the different possible combinations are mode 1, mode 2, mode 3-A and mode 3-B in Figure~\ref{fig:modes}).

We first start by the occurrence of mode 3-A:
\begin{lemma} \label{lem:1a}
Suppose that $a-2b>0$. Suppose that $(A, A)$ and $(A+1, A)$ are two Nash equilibria. Then
$$ A = \frac{bN-2b-a}{3a+2b}. $$
\end{lemma}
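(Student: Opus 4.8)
The plan is to exploit the same kind of best-response inequalities that were used for the interior-point analysis, now specialized to the two hypothesized equilibria $(A,A)$ and $(A+1,A)$. Since $u_A:\fABC\mapsto U_A(\fABC,\fBAC)$ is strictly convex (established earlier from $\partial^2 U_A/\partial\fABC^2=2(a+2b)>0$), and similarly for $u_B$, each Nash equilibrium condition reduces to a pair of one-sided discrete optimality inequalities. So I would first write out, in terms of the integer counts $\nABC=NA$ and $\nBAC=NA$, the four inequalities: from $(A,A)$ being an equilibrium, $U_A(A,A)\le U_A(A+\tfrac1N,A)$ (player $A$ cannot gain by increasing) together with $U_A(A,A)\le U_A(A-\tfrac1N,A)$, and the symmetric conditions for $B$; from $(A+1,A)$ being an equilibrium, $U_A(A+1,A)\le U_A(A+1-\tfrac1N,A)$ and the matching upward condition, plus the $B$-conditions at $(A+1,A)$.

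Next I would simplify using the explicit formula $\partial U_A/\partial\fABC=-b+(a-2b)\fBAC+2(a+2b)\fABC$. Because $u_A(\cdot,A)$ is quadratic with leading coefficient $(a+2b)/1$ in the continuous variable, the discrete optimality of $\fABC=A$ versus $\fABC=A+1$ sandwiches the continuous minimizer $x_A(A)=\dfrac{b-(a-2b)A}{2(a+2b)}$ between $A$ and $A+1$ in the appropriate way; more precisely, $(A,A)$ optimal for player $A$ forces $x_A(A)\le A+\tfrac1{2N}$, while $(A+1,A)$ optimal forces $x_A(A)\ge A+1-\tfrac1{2N}$ — wait, more carefully: comparing the two candidate equilibria I should get that the player-$A$ value function at the common second coordinate $A$ is minimized both at $A$ and at $A+1$, which can happen only when the continuous parabola's vertex sits exactly at the midpoint $A+\tfrac1{2N}$. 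That midpoint condition is one equation. The cleanest route is to subtract the relevant inequalities: $U_A(A,A)\le U_A(A+\tfrac1N,A)$ and $U_A(A+1,A)\le U_A(A,A)$ (the latter from $(A+1,A)$ being at least as good, using that $A$ is the point one step down) together pin $A$ down; I would carry out that subtraction to get a linear equation in $A$, and likewise use the player-$B$ conditions at the two equilibria to confirm consistency.

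Concretely, I expect the player-$A$ inequalities at $(A,A)$ to give $bN\le (a-2b)A+(a+2b)(2A+1)$ and the player-$A$ inequalities at $(A+1,A)$ (taking the downward move to $A$, which must be no better) to give $(a-2b)A+(a+2b)(2A+1)\le bN$, so that equality holds: $bN=(a-2b)A+(a+2b)(2A+1)=(3a)A+\cdots$; solving, $bN=(3a+2b)A+(a+2b)$, i.e. $A=\dfrac{bN-2b-a}{3a+2b}$, which is exactly the claimed value. I would then note the player-$B$ conditions at both points are automatically satisfied given $a-2b>0$ (they give one-sided bounds that do not conflict), so no further constraint arises. The main obstacle is purely bookkeeping: getting the discrete best-response inequalities in the right direction — which move is "at least as good" for which player at which of the two equilibria — and making sure the $A+1$ versus $A$ comparisons line up so that the two opposing inequalities collapse to the single equation; once the correct pair is isolated the algebra is a one-line linear solve. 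A secondary point to check is that this $A$ is indeed an integer (equivalently $z(N)=N\gamma-\widetilde N$ takes the value $\alpha$), but that is the content of the surrounding discussion of modes rather than of this lemma.
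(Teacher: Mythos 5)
Your proposal is correct and follows essentially the same route as the paper: the two one-sided deviation inequalities for player $A$ (from $(A,A)$ not improving by moving up, and from $(A+1,A)$ not improving by moving down) collapse to the equality $U_A(A,A)=U_A(A+1,A)$, i.e.\ $bN=(a-2b)A+(a+2b)(2A+1)=(3a+2b)A+a+2b$, which is exactly the paper's computation. The surrounding hesitation about vertex/midpoint conditions is unnecessary but harmless, since the final pair of inequalities you isolate is the right one.
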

\begin{proof}
Suppose that $(A, A)$ and $(A+1, A)$ are two Nash equilibria. Then necessarily 
$ U_A(A, A) = U_A(A+1, A)$.
Hence
$$ \begin{array}{l}
-bAN+(a-2b)A^2+(2b+a) A^2 \\
\, = -b(A+1)N+(a-2b)A(A+1)+(2b+a) (A+1)^2 \end{array}$$
i.e.
$$ bN = (a-2b)A + (2b+a) (2A+1) 
\Rightarrow
bN-2b-a = (3a+2b) A 
$$
which leads to the conclusion.
\end{proof}

Hence, the system is in mode 3-A iff $bN-2b-a$ is divisible by $3a+2b$ or in other words, if $N$ is of the form $\frac{(3a+2b)K+2a}{b}$ for some integer $K$.

We then move on to Mode 3-B:
\begin{lemma} \label{lem:1b}
Suppose that $a-2b>0$. Suppose that $(A+1, A+1)$ and $(A+1, A)$ are two Nash equilibria. Then
$$ A = \frac{bN-2a}{3a+2b}.$$
\end{lemma}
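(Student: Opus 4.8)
The plan is to mirror exactly the argument of Lemma~\ref{lem:1a}, replacing the pair of adjacent equilibria $(A,A),(A+1,A)$ by the pair $(A+1,A+1),(A+1,A)$, and tracking which player's utility must be used to obtain the governing equality. Since $(A+1,A)$ and $(A+1,A+1)$ differ only in player $B$'s action (from $A$ to $A+1$), and both are best responses to player $A$ playing $A+1$, the convexity of $u_B$ forces $U_B(A+1,A) = U_B(A+1,A+1)$ — a player can have two discrete optima only if they are adjacent and tie. This is the single key equality to exploit.

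First I would write out both sides using the explicit form of $U_B$ from Eq.~\eqref{eq:utB}, namely $U_B(\fABC,\fBAC) = b - b\fBAC + b\fABC + (a-2b)\fBAC\fABC + (a+2b)\fBAC^2$, evaluated in terms of the integer counts $\nABC, \nBAC$ (so $\fABC = \nABC/N$ etc.), exactly as in the proof of the preceding lemma where the bracketed quantities are $-b\nBAC N + (a-2b)\nABC\nBAC + (2b+a)\nBAC^2$ up to the common scaling. Setting $\nABC^* = A+1$ and equating the $\nBAC^* = A$ value with the $\nBAC^* = A+1$ value gives, after cancelling the quadratic-in-$A$ terms,
$$ bN = (a-2b)(A+1) + (2b+a)(2A+1) = (3a+2b)A + (2b+a) + (a-2b) = (3a+2b)A + 2a,$$
hence $A = \dfrac{bN - 2a}{3a+2b}$, which is the claimed identity. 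The only thing to be careful about is the bookkeeping of the cross term: the $(a-2b)\nBAC\nABC$ piece contributes $(a-2b)(A+1)A$ at $\nBAC=A$ and $(a-2b)(A+1)^2$ at $\nBAC=A+1$, so its difference is $(a-2b)(A+1)$, not $(a-2b)A$ — this is precisely the sign/offset that distinguishes Mode 3-B from Mode 3-A and must be gotten right.

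I do not anticipate a genuine obstacle; the main risk is purely a computational one — keeping straight which variable is held fixed (here $\nABC = A+1$) and getting the constant term $2a$ rather than $2b+a$. As a consistency check I would verify that $A = (bN-2a)/(3a+2b)$ is consistent with $\widetilde N = \lfloor N\gamma\rfloor = \lfloor bN/(3a+2b)\rfloor$ and the bounds $z(N) = \beta = 2a/(3a+2b)$ from the statement of Theorem~\ref{th:big}: indeed $bN/(3a+2b) = A + 2a/(3a+2b) = A + \beta$, and since $0 < \beta < 1$ we get $\widetilde N = A$, so the equilibrium $(\nABC^*,\nBAC^*) = (A+1,A) = (\widetilde N+1, \widetilde N)$ matches the Mode 3-B description. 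I would close the proof there.
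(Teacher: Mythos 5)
Your proposal is correct and takes essentially the same route as the paper: the paper equates $U_A(A+1,A+1)=U_A(A,A+1)$ on the reflected pair, which by the symmetry $U_B(x,y)=U_A(y,x)$ is exactly your tie condition $U_B(A+1,A)=U_B(A+1,A+1)$, and both reduce to the same equation $bN=(a-2b)(A+1)+(a+2b)(2A+1)=(3a+2b)A+2a$. Your bookkeeping of the cross term (difference $(a-2b)(A+1)$ rather than $(a-2b)A$) matches the paper's computation, so nothing is missing.
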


\begin{proof}
Suppose that $(A+1, A+1)$ and $(A, A+1)$ are two Nash equilibria, then
$ U_1(A+1, A+1) = U_1(A, A+1)$. This implies
$$\begin{array}{l}
-Nb(A+1)+(a-2b)(A+1)^2+(2b+a) (A+1)^2 = \\
\hfill -NbA+(a-2b)A(A+1)+(2b+a) A^2
\end{array}
$$
$$ \Rightarrow (a-2b)(A+1) + (2b+a) (2A+1) = Nb $$
$$ \Rightarrow (3a + 2b) A = Nb- 2a $$
which concludes the proof.
\end{proof}

Hence, the system is in mode 3-B iff $bN-2a$ is divisible by $3a+2b$ or in other words, if $N$ is of the form $\displaystyle \frac{(3a+2b)K+2b+a}{b}$ for some integer $K$.

Finally, for Mode 2:
\begin{lemma}
Suppose that $a-2b>0$. Suppose that $(A, A+1)$ and $(A+1, A)$ are only two Nash equilibria. Then
$$ (3a+2b)A + 2b+a < bN < (3a+2b)A+ 2a.$$
\end{lemma}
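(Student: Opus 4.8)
The plan is to characterize when $(A,A+1)$ and $(A+1,A)$ are \emph{both} Nash equilibria while neither $(A,A)$ nor $(A+1,A+1)$ is, using the convexity of $u_A$ and $u_B$ established earlier. Since $u_A$ (resp. $u_B$) is strictly convex in $\fABC$ (resp. $\fBAC$), the discrete best response of player $A$ to a fixed $\fBAC$ lies among at most two consecutive integer multiples of $1/N$; similarly for $B$. From the interior-equilibrium analysis already carried out (the bounds $\frac{b}{3a+2b} - \frac{2b+a}{N(6b+a)} \le \fABC, \fBAC \le \frac{b}{3a+2b} + \frac{2b+a}{N(6b+a)}$ together with $\frac12 < \frac{2b+a}{6b+a} < 1$), the only candidate values for each coordinate are $A/N$ and $(A+1)/N$ where $A = \widetilde N = \lfloor N\gamma \rfloor$. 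So the only possible equilibria are the four points $(A,A)$, $(A,A+1)$, $(A+1,A)$, $(A+1,A+1)$, and by the preceding lemma $(A,A)$ and $(A+1,A+1)$ are mutually exclusive.

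First I would write the best-response conditions explicitly. The point $(A,A+1)$ is a Nash equilibrium iff $A$ is the discrete minimizer of $x \mapsto U_A(x, (A+1)/N)$ and $A+1$ is the discrete minimizer of $x \mapsto U_B(A/N, x)$; by convexity these reduce to the two-sided inequalities $U_A(A,A+1) \le U_A(A+1,A+1)$, $U_A(A,A+1) \le U_A(A-1,A+1)$ and $U_B(A,A+1) \le U_B(A,A)$, $U_B(A,A+1) \le U_B(A,A+2)$ (in the abuse of notation where arguments denote integer flow counts). By the transpose symmetry of the game, $(A,A+1)$ being an equilibrium is equivalent to $(A+1,A)$ being one, so it suffices to analyze one of them. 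Expanding $U_A(A,A+1) \le U_A(A+1,A+1)$ using \eqref{eq:utA} gives, after cancellation of quadratic terms, an inequality of the form $bN \le (a-2b)(A+1) + (2b+a)(2A+1)$, i.e. $bN \le (3a+2b)A + 2a$; and expanding $U_B(A,A+1) \le U_B(A,A)$ gives $(3a+2b)A + 2b + a \le bN$ (this is exactly the computation appearing in Lemmas~\ref{lem:1a} and~\ref{lem:1b}, with the equalities there replaced by strict/weak inequalities here). Combining, $(A,A+1)$ and $(A+1,A)$ are both equilibria iff $(3a+2b)A + 2b+a \le bN \le (3a+2b)A + 2a$.

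Then I would rule out the boundary cases to get the strict inequalities in the statement. The endpoints $bN = (3a+2b)A + 2b+a$ and $bN = (3a+2b)A + 2a$ are precisely the cases where an \emph{additional} equilibrium $(A+1,A+1)$ or $(A,A)$ respectively appears (these are modes 3-B and 3-A, by Lemmas~\ref{lem:1b} and~\ref{lem:1a}); since the hypothesis of the lemma is that $(A,A+1)$ and $(A+1,A)$ are the \emph{only} two Nash equilibria, we must exclude the endpoints, yielding the strict chain $(3a+2b)A + 2b+a < bN < (3a+2b)A + 2a$. One should also check the remaining best-response inequalities ($U_A(A,A+1) \le U_A(A-1,A+1)$ and $U_B(A,A+1) \le U_B(A,A+2)$) are automatically implied in this range, which follows from convexity together with the fact that $A = \lfloor N\gamma \rfloor$ places $A$ within distance $1$ of the continuous optimum.

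The main obstacle I anticipate is bookkeeping rather than anything conceptual: keeping the normalization between "number of connections" $\nABC$ and "fraction" $\fABC = \nABC/N$ consistent throughout, since $U_A$ in \eqref{eq:utA} is written in terms of fractions but the lemma statement is in terms of integers, so every $U_A(\cdot,\cdot)$ with integer arguments is shorthand for evaluation at the corresponding fractions, introducing the $N$ factors one must track carefully. A secondary subtlety is making sure the two-sided convexity argument genuinely reduces the four-parameter discrete optimality conditions to exactly the two binding inequalities above and that the other two are slack; this requires the observation $\frac12 < \frac{2b+a}{6b+a} < 1$ already used for the interior analysis, which guarantees at most two consecutive candidates per coordinate and hence that checking adjacent deviations suffices.
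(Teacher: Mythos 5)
Your proposal is correct and follows essentially the same route as the paper: you extract the two binding deviation inequalities (yours, $U_B(A,A+1)\le U_B(A,A)$, is the transpose of the paper's $U_A(A+1,A)\le U_A(A,A)$), obtain $(3a+2b)A+2b+a\le bN\le(3a+2b)A+2a$, and then invoke Lemmas~\ref{lem:1a} and~\ref{lem:1b} to exclude the endpoints under the ``only two equilibria'' hypothesis. Your additional remarks on checking the slack adjacent-deviation inequalities and on the $\nABC$ versus $\fABC$ normalization are sound bookkeeping but do not change the argument.
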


\begin{proof}
Suppose that $(A, A+1)$ and $(A+1, A)$ are two Nash equilibria, then:
$$ \begin{array}{@{}l}
U_A(A, A+1) \leq U_A(A+1, A+1) \text{ and } \\
U_A(A+1, A) \leq U_A(A, A) 
\end{array}$$
ie
$$  \left\{ \begin{array}{l}
bN \leq (3a+2b)A+ 2a\\
(3a+2b)A + 2b+a \leq bN \\
\end{array}\right.
$$

The conclusion comes from Lemma~\ref{lem:1a} and \ref{lem:1b}, since neither $(A,A)$ nor $(A+1, A+1)$ are Nash equilibria.
\end{proof}

Finally the system is in mode 1 if it is not in any over modes. One can then check that the boundary cases found in 
Theorem~\ref{th:0x} corresponds to the case where $A=0$ which concludes the proof.

\end{document}